\begin{document}

\newtheorem{theorem}{Theorem}
\newtheorem{algorithm}{Algorithm}
\newtheorem{corollary}{Corollary}
\newtheorem{definition}{Definition}
\newtheorem{example}{Example}
\newtheorem{lemma}{Lemma}

\title{Unirational Fields of Transcendence Degree One and Functional Decomposition\footnote{This work is partially supported by Spanish DGES Grant Project PB97--0346}}

\author{Jaime Gutierrez \and Rosario Rubio \and David Sevilla}

\date{}

\maketitle

\begin{abstract}
In this paper we present an algorithm to compute all  unirational
fields of transcendence degree one containing a given finite set
of multivariate rational functions. In particular, we provide an
algorithm to decompose a multivariate rational function $f$ of the
form $f=g(h)$, where $g$ is a univariate rational function and $h$
a multivariate one.

\end{abstract}
\section{Introduction}
Let $\mathbb{K}$ be an arbitrary field and
$\mathbb{K}(\mathbf{x})=\mathbb{K}(x_1,\ldots,x_n)$ be the
rational function field in the variables ${\mathbf{x}}
=(x_{1},\ldots, x_{n})$. A unirational field over $\mathbb{K}$ is
an intermediate field $\mathbb{F}$ between $\mathbb{K}$ and
$\mathbb{K}(\mathbf{x})$. We know that any unirational field is
finitely generated over $\mathbb{K}$ (see \cite{Nag93}). In the
following whenever we talk about ``computing an intermediate
field'' we mean that such finite set of generators is to be
calculated. The problem of finding unirational fields is a
classical one. In this paper we are looking for unirational fields
$\mathbb{F}$ over $\mathbb{K}$ of transcendence degree one over
$\mathbb{K}$, $\mathrm{tr.deg}(\mathbb{F}/\mathbb{K})=1$.
\par
In the univariate case, the problem can be stated as follows:
given univariate rational functions $f_1,\ldots,f_m\in
\mathbb{\mathbb{K}}(y)$, we wish to know if there exists a proper
intermediate field $\mathbb{F}$ such that
$\mathbb{K}(f_1,\ldots,f_m)\subset \mathbb{F}\subset
\mathbb{K}(y)$; and in the affirmative case, to compute it.  By
the classical L\"{u}roth theorem (see \cite{vdW66}) the problem is
divided in two parts: first to compute $f$ such that
$\mathbb{K}(f_1,\ldots,f_m)=\mathbb{K}(f)$, and second to
decompose the rational function $f$, i.e., to find $g,h\in
\mathbb{K}(y)$ such that,  $\mathbb{F}= \mathbb{K}(h)$ with
$f=g(h)$. Constructive proofs of L\"{u}roth's theorem can be found
in \cite{Net1895}, \cite{sed86} and \cite{AGR95}. Algorithms for
decomposition of univariate rational functions can be found in
\cite{Zip91} and \cite{AGR95}. \par

In the multivariate case, the problem is: given $f_1,\ldots,f_m$
in $\mathbb{K}(\mathbf{x})$ we wish to know if there exists a
proper intermediate field $\mathbb{F}$ such that
$\mathbb{K}(f_1,\ldots,f_m)\subset \mathbb{F}\subset
\mathbb{K}(\mathbf{x})$ with
$\mathrm{tr.deg}(\mathbb{F}/\mathbb{K})=1$; and in the affirmative
case, to compute it. A central result is the following
generalization of L\"uroth's theorem:

\begin{theorem}[Extended L\"uroth's Theorem]\label{extended}
Let $\mathbb{F}$ be a field such that $\mathbb{K}\subset
\mathbb{F}\subset \mathbb{K}(\mathbf{x})$. If
$\mathrm{tr.deg}(\mathbb{F}/\mathbb{K})=1$, then there exists
$f\in \mathbb{K}(\mathbf{x})$ such that
$\mathbb{F}=\mathbb{K}(f)$.
\end{theorem}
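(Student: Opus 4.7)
My plan is to reduce the multivariate statement to the classical (univariate) L\"uroth theorem by forming a compositum with $n-1$ of the variables. Since $\mathbb{F}$ is a unirational field it is finitely generated over $\mathbb{K}$, and since $\mathbb{K}(\mathbf{x})/\mathbb{K}$ is purely transcendental, $\mathbb{K}$ is algebraically closed in $\mathbb{K}(\mathbf{x})$; hence any non-constant $t\in\mathbb{F}$ is transcendental over $\mathbb{K}$, and the hypothesis $\mathrm{tr.deg}(\mathbb{F}/\mathbb{K})=1$ makes $\mathbb{F}/\mathbb{K}(t)$ a finite algebraic extension. I will also use the standard fact that any field is algebraically closed in its purely transcendental extensions.

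After permuting the variables I may assume that $x_{1},\ldots,x_{n-1}$ is a transcendence basis of $\mathbb{K}(\mathbf{x})$ over $\mathbb{F}$, so that $\mathbb{K}(\mathbf{x})/\mathbb{F}(x_{1},\ldots,x_{n-1})$ is a finite algebraic extension. The compositum $\mathbb{E}=\mathbb{F}(x_{1},\ldots,x_{n-1})$ is then an intermediate field strictly between $\mathbb{K}(x_{1},\ldots,x_{n-1})$ and $\mathbb{K}(x_{1},\ldots,x_{n-1})(x_{n})=\mathbb{K}(\mathbf{x})$; the contrary case $\mathbb{F}\subseteq\mathbb{K}(x_{1},\ldots,x_{n-1})$ reduces to a problem in fewer variables, handled by induction on $n$ with the classical L\"uroth theorem as base case. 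Applying classical L\"uroth over the coefficient field $\mathbb{K}(x_{1},\ldots,x_{n-1})$ yields $h\in\mathbb{K}(\mathbf{x})$ with $\mathbb{E}=\mathbb{K}(x_{1},\ldots,x_{n-1})(h)$. Writing $h=N(x_{1},\ldots,x_{n-1})/D(x_{1},\ldots,x_{n-1})$ with $N,D$ polynomials whose coefficients lie in $\mathbb{F}$, let $\mathbb{F}_{0}\subseteq\mathbb{F}$ be the subfield they generate over $\mathbb{K}$. Since $h\notin\mathbb{K}(x_{1},\ldots,x_{n-1})$, at least one coefficient is transcendental over $\mathbb{K}$, so $\mathbb{F}_{0}$ has transcendence degree $1$; and since $\mathbb{F}\subseteq\mathbb{F}_{0}(x_{1},\ldots,x_{n-1})$ with $\mathbb{F}/\mathbb{F}_{0}$ algebraic and $\mathbb{F}_{0}$ algebraically closed in this purely transcendental extension, we conclude $\mathbb{F}=\mathbb{F}_{0}$.

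The main obstacle is the final descent: passing from the finite generating family of coefficients to a single generator $f$ of $\mathbb{F}$ with $\mathbb{F}=\mathbb{K}(f)$. Phrased abstractly this is essentially a reformulation of the theorem itself, so the step cannot be purely formal; one must exploit the specific origin of the coefficients as those of a single rational expression $h$. Concretely, the L\"uroth generator $h$ is unique up to a M\"obius transformation over $\mathbb{K}(x_{1},\ldots,x_{n-1})$, so candidates for $f$ arise as $(\alpha N+\beta D)/(\gamma N+\delta D)$ with $\alpha,\beta,\gamma,\delta\in\mathbb{K}(x_{1},\ldots,x_{n-1})$ chosen to make the resulting expression lie in $\mathbb{F}$; a L\"uroth-style minimality argument on the complexity of such an $f$ (analogous to the classical univariate minimum-degree proof) should then force $\mathbb{K}(f)=\mathbb{F}$. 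This concluding extraction is where the genuine multivariate content of the theorem lies; the rest of the argument is a clean reduction to the classical univariate setting.
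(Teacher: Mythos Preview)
First, a framing point: the paper itself does not give a self-contained proof of the Extended L\"uroth Theorem. Immediately after stating it, the paper attributes the result to Gordan (characteristic zero), Igusa (general case), and Schinzel; the contribution of Section~2 is Algorithm~\ref{alg-netto}, which \emph{computes} a L\"uroth generator. The very first line of that algorithm's correctness argument is ``If $\mathbb{F}$ has transcendence degree $1$ over $\mathbb{K}$, we can write $\mathbb{F}=\mathbb{K}(f)$''---that is, the theorem is assumed, not proved. So there is no paper-internal proof to compare against, only the cited literature.

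Turning to your proposal: your reduction up through $\mathbb{F}=\mathbb{F}_{0}$ is correct and is a standard opening move. The choice of $x_{1},\dots,x_{n-1}$ as a transcendence basis of $\mathbb{K}(\mathbf{x})$ over $\mathbb{F}$ is legitimate, classical L\"uroth over $\mathbb{K}(x_{1},\dots,x_{n-1})$ applies, and the algebraic-closure argument forcing $\mathbb{F}=\mathbb{F}_{0}$ is fine. But, as you yourself concede, the final step is ``essentially a reformulation of the theorem itself'': you have reduced to the statement that a finitely generated subfield of $\mathbb{K}(\mathbf{x})$ of transcendence degree one is simple over $\mathbb{K}$, which is exactly what was to be shown. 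Your sketch of a completion---searching among M\"obius transforms of $h$ for one that lands in $\mathbb{F}$ and then invoking ``a L\"uroth-style minimality argument''---is not a proof: you have not established that any L\"uroth generator of $\mathbb{E}$ over $\mathbb{K}(x_{1},\dots,x_{n-1})$ actually lies in $\mathbb{F}$, and the minimality clause is a hope rather than an argument. (It is true that if you could exhibit some $f\in\mathbb{F}$ which also generates $\mathbb{E}$ over $\mathbb{K}(x_{1},\dots,x_{n-1})$, the same algebraic-closure trick you already used would give $\mathbb{F}=\mathbb{K}(f)$ at once; the entire difficulty is the existence of such an $f$.) As it stands, your write-up is a valid preliminary reduction followed by an honest acknowledgement that the essential step remains undone.
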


Such an $f$ is called a L\"uroth's generator of the field
$\mathbb{F}$. This theorem was first proved in \cite{Gor1887} for
characteristic zero and in \cite{Igu95} in general, see also
\cite{Sch82} Theorem 3. Using Gr\"obner basis computation, the
paper \cite{Ste00} provides an algorithm to compute a L\"uroth's
generator, if it exits. See also \cite{Sarito00} for  another
algorithmic proof of this result.
 In this paper we present a new algorithm, which only
requires to compute  $\mathrm{gcd}$'s,  to detect if a unirational
field has transcendence degree 1 and, in the affirmative case, to
compute a L\"uroth's generator. We also present a constructive
proof of the above theorem for polynomials (see \cite{Noe15}): if
the unirational field contains a non-constant polynomial, then  it
is generated by a polynomial. \par

By the Extended L\"uroth's theorem, to find an intermediate field
of transcendence degree one is equivalent to the following: first
to find a L\"{u}roth's generator $f$, i.e.,
$\mathbb{K}(f_1,\ldots,f_m)=\mathbb{K}(f)$, if it exists, and
second to decompose the multivariate rational function $f$, i.e.,
to find $g\in \mathbb{K}(y)$ and $h\in \mathbb{K}(\mathbf{x})$
such that $f=g(h)$ in a nontrivial way. The pair $(g,h)$ is called
a uni--multivariate decomposition of $f$.  We present two
algorithms to compute a nontrivial uni--multivariate decomposition
of a multivariate rational function, if it exits.
\par

This paper is divided in four sections. In section 2 we state the
proof of the Extended L\"uroth's theorem and its polynomial
version. In section 3 we present and analyze two algorithms to
compute a uni--multivariate decomposition of a rational function,
if it exists. In section 4 we discuss the performance of these
algorithms.

\section{The Extended L\"uroth Theorem}
In this section we present an algorithm to the following
computational problem: given $f_1,\ldots,f_m\in
\mathbb{K}(\mathbf{x})$, to compute a L\"uroth generator $f$ for
$\mathbb{F}=\mathbb{K}(f_1,\ldots,f_m)$ if it exists, moreover we
detect if $\mathbb{F}$ contains a non-constant polynomial, and in
the affirmative case we find a generating polynomial. We start
with the following definition:

\begin{definition} Let
$p\in\mathbb{K}[x_1,\ldots,x_n,y_1,\ldots,y_n]
=\mathbb{K}[\mathbf{x},\mathbf{y}]$ be a non--constant polynomial.
We say that $p$ is {\bf near--separated} if there exist
non--constant polynomials $r_1,s_1\in
\mathbf{K}[\mathbf{x}]=\mathbb{K}[x_1,\ldots,x_n]$ and $r_2,s_2\in
\mathbf{K}[\mathbf{y}]= \mathbb{K}[y_1,\ldots,y_n]$, such that
neither $r_1,s_1$ are associated, nor $r_2,s_2$ are associated and
$p=r_1s_2-r_2s_1$. In the particular case when
$p=r(x_1,\ldots,x_n)s(y_1,\ldots,y_n)-s(x_1,\ldots,x_n)r(y_1,\ldots,y_n)$,
we say that $p$ is a {\bf symmetric near--separated} polynomial.
We say that $(r,s)$ is a {\bf symmetric near--separated
representation} of $p$.
\end{definition}

In this paper, $\mathrm{deg}_{x_1,\ldots,x_n}$ will denote the
total degree with respect to the variables $x_1,\dots,x_n$ and
$\mathrm{deg}$ will denote the total degree with respect to all
the variables. Also, given a rational function $f$ we will also de
note as $f_n,f_d$ the numerator and denominator of $f$,
respectively.

In the following theorem we give some basic properties of
near--separated polynomials, for later use.

\begin{theorem}\label{prop-casisep}
Let $p\in\mathbb{K}[\mathbf{x},\mathbf{y}]$ be a near--separated
polynomial and $r_1,s_1,r_2,s_2$ as in the above definition. Then
\begin{description}
\item[(i)] If $\mathrm{gcd}(r_1,s_1)=1$ and
$\mathrm{gcd}(r_2,s_2)=1$, $p$ has no factors in
$\mathbb{K}[\mathbf{x}]$ or $\mathbb{K}[\mathbf{y}]$.
\item[(ii)]
$\mathrm{deg}_{x_1,\ldots,x_n}\,p=\mathrm{max}\{\mathrm{deg}\
r_1,\mathrm{deg}\ s_1\}$ and
$\mathrm{deg}_{y_1,\ldots,y_n}\,p=max\{\mathrm{deg}\
r_2,\mathrm{deg}\ s_2\}$.
\item[(iii)] If $p$ is symmetric and $(\alpha_1,\ldots,\alpha_n)\in \mathbb{K}^n$
verifies\linebreak$p(x_1,\ldots,x_n,\alpha_1,\ldots,\alpha_n)\neq
0$, then there exists $(r,s)$, a symmetric near--separated
representation of $p$, such that
\[r(\alpha_1,\ldots,\alpha_n)=0\quad {\rm and }\quad s(\alpha_1,\ldots,\alpha_n)=1. \]
\item[(iv)] If $p$ is symmetric, the coefficient of $x_k^{i_0}y_k^{j_0}$ in $p$
is the near--separated polynomial
\[
a_{i_0}\,b_{j_0} - \\ b_{i_0}\,a_{j_0},
\]
\noindent where $a_i$ is the coefficient of $x_k^i$ in $r$ and
$b_i$ is the coefficient of $x_k^i$ in $s$.
\end{description}
\end{theorem}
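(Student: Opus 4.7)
The plan is to handle the four items in turn, with part (ii) carrying the most subtle step. For \textbf{(i)}, I view $p=r_1s_2-r_2s_1$ as a polynomial in $\mathbf{y}$ with coefficients in $\mathbb{K}[\mathbf{x}]$. Expanding $r_2=\sum_J b_J\mathbf{y}^J$ and $s_2=\sum_J c_J\mathbf{y}^J$, the coefficient of $\mathbf{y}^J$ in $p$ is $c_Jr_1-b_Js_1$, and any factor $q\in\mathbb{K}[\mathbf{x}]$ of $p$ must divide every such coefficient. Because $r_2,s_2$ are nonconstant and coprime, they are not $\mathbb{K}$-proportional, so some $2\times 2$ minor of the coefficient matrix $(b_J,c_J)$ is nonzero; taking the associated $\mathbb{K}$-linear combinations gives $q\mid r_1$ and $q\mid s_1$, forcing $q\in\mathbb{K}$ by $\mathrm{gcd}(r_1,s_1)=1$. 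The analogous argument rules out factors in $\mathbb{K}[\mathbf{y}]$.

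For \textbf{(ii)}, regarding $p$ as a polynomial in $\mathbf{x}$ with coefficients in $\mathbb{K}[\mathbf{y}]$ gives the bound $\deg_{\mathbf{x}}p\le\max\{\deg r_1,\deg s_1\}$ immediately, and equality is clear when the two degrees differ (the leading form in $\mathbf{x}$ is $s_2\,r_1^{\star}$ or $r_2\,s_1^{\star}$, nonzero by nonconstancy). In the equal-degree case $\deg r_1=\deg s_1=d$, a cancellation of top $\mathbf{x}$-forms would force $s_2\,r_1^{\star}=r_2\,s_1^{\star}$ in $\mathbb{K}[\mathbf{x},\mathbf{y}]$, where $r_1^{\star},s_1^{\star}$ are the degree-$d$ homogeneous parts of $r_1,s_1$. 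Writing $r_2=g\tilde r_2$, $s_2=g\tilde s_2$ with $g=\mathrm{gcd}(r_2,s_2)$ and $\mathrm{gcd}(\tilde r_2,\tilde s_2)=1$, this becomes $\tilde s_2\,r_1^{\star}=\tilde r_2\,s_1^{\star}$, so $\tilde s_2\mid s_1^{\star}$ in the UFD $\mathbb{K}[\mathbf{x},\mathbf{y}]$. Since $\tilde s_2\in\mathbb{K}[\mathbf{y}]$ and $s_1^{\star}\in\mathbb{K}[\mathbf{x}]$ this forces $\tilde s_2\in\mathbb{K}$, and by symmetry $\tilde r_2\in\mathbb{K}$, making $r_2,s_2$ associated and contradicting the definition. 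The $\mathbf{y}$-statement follows by swapping the roles of $\mathbf{x}$ and $\mathbf{y}$.

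For \textbf{(iii)}, I would exploit the action of the special linear group: a direct calculation shows that the substitution $(r,s)\mapsto(\lambda r+\mu s,\nu r+\tau s)$ multiplies the symmetric representation by $\lambda\tau-\mu\nu$, so any matrix in $SL_2(\mathbb{K})$ yields another representation of the same $p$. Setting $a=r(\boldsymbol\alpha)$, $b=s(\boldsymbol\alpha)$, the hypothesis $p(\mathbf{x},\boldsymbol\alpha)=b\,r(\mathbf{x})-a\,s(\mathbf{x})\not\equiv 0$, together with the $\mathbb{K}$-linear independence of $r,s$ (forced by $p\neq 0$), gives $(a,b)\neq(0,0)$. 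I then take $(\lambda,\mu)=(b,-a)$, which makes $r'=br-as$ vanish at $\boldsymbol\alpha$, and choose $(\nu,\tau)$ satisfying the single linear equation $\nu a+\tau b=1$; this automatically supplies the determinant condition as well, since $\lambda\tau-\mu\nu=b\tau+a\nu=1$. Finally, \textbf{(iv)} is a pure expansion: writing $r(\mathbf{x})=\sum_i a_ix_k^i$ and $s(\mathbf{x})=\sum_i b_ix_k^i$ with $a_i,b_i$ in the remaining $\mathbf{x}$-variables, and similarly for $\mathbf{y}$, substitution into $r(\mathbf{x})s(\mathbf{y})-s(\mathbf{x})r(\mathbf{y})$ groups terms so that the coefficient of $x_k^{i_0}y_k^{j_0}$ is exactly $a_{i_0}b_{j_0}-b_{i_0}a_{j_0}$, manifestly symmetric near--separated. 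The main obstacle I expect is the equal-degree argument in (ii), which is the single place where the non-association hypothesis must be invoked in a nontrivial way.
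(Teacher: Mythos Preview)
Your proof is correct, and for parts (ii) and (iv) it is essentially the paper's argument---in (ii) you spell out in more detail why cancellation of the top $\mathbf{x}$-forms forces $r_2$ and $s_2$ to be associated, whereas the paper asserts this in one sentence. In (i) and (iii) you take genuinely different routes. For (i), the paper passes to a root $\alpha$ of the putative factor in an extension of $\mathbb{K}[x_2,\ldots,x_n]$ and deduces $r_2/s_2\in\mathbb{K}$ from the resulting identity; you instead stay inside $\mathbb{K}[\mathbf{x}]$ and argue via content, extracting $r_1$ and $s_1$ from $\mathbb{K}$-linear combinations of the $\mathbf{y}$-coefficients $c_Jr_1-b_Js_1$. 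Your version avoids any auxiliary extension and is more elementary. For (iii), the paper does an explicit three-case split on which of $r(\boldsymbol\alpha),\,s(\boldsymbol\alpha)$ vanish, while your $SL_2(\mathbb{K})$ formulation absorbs all three cases at once; these are the same idea in different packaging. One small point to watch in (iii): the resulting pair $(r',s')$ must have both components nonconstant to qualify as a symmetric near--separated representation, and an unlucky choice of $(\nu,\tau)$ can make $s'=\nu r+\tau s$ constant (for instance $r=x$, $s=x+1$, $\boldsymbol\alpha=0$, $\nu=-1$, $\tau=1$ gives $s'=1$). Choosing $(\nu,\tau)$ with one entry zero---so that $s'$ is a scalar multiple of $r$ or of $s$, which is what the paper's case analysis effectively does---removes this issue.
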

\begin{proof}
{\bf (i)} Suppose $v\in\mathbb{K}[x_1,\ldots,x_n]$ is a
non--constant factor of $p$. Then there exists $i$ such that
$\mathrm{deg}_{x_i}v\geq~1$. Without loss of generality we will
suppose that $i=1$. Let $\alpha$ be a root of $v$, considering $p$
as a univariate polynomial in the variable $x_1$, in a suitable
extension of $\mathbb{K}[x_2,\ldots,x_n]$. If $\alpha $ is a root
of any of the polynomials $r_1$ or $s_1$, then it is also a root
of the other. This is a contradiction, because
$\mathrm{gcd}(r_1,s_1)=1$. Therefore $\alpha$ is neither a root of
$r_1$ nor of $s_1$. Then,
\[\displaystyle{\frac{r_1(\alpha,x_2,\ldots,x_n)}{s_1(\alpha,x_2,\ldots,x_n)}} =
\displaystyle{\frac{r_2(y_1,\ldots,y_n)}{s_2(y_1,\ldots,y_n)}}\in\mathbb{K}.\]
A contradiction again, since $r_2,s_2$  are not associated in
$\mathbb{K}$.

{\bf (ii)} If $\mathrm{deg}\ r_1\neq\mathrm{deg}\ s_1$, the
equality is trivial. Otherwise, if $\mathrm{deg}\
r_1=\mathrm{deg}\ s_1>\mathrm{deg}_{x_1,\ldots,x_n}p$, the terms
with greatest degree with respect to $x_1,\ldots,x_n$ vanish. This
is a contradiction, because $r_2,s_2$ are not associated. The
proof is the same for $r_2,s_2$.

{\bf (iii)} Let $(r,s)$ be a representation of $p$.
\begin{description}
\item [--] If
$r(\alpha_1,\ldots,\alpha_n)=0$, since
$p(x_1,\ldots,x_n,\alpha_1,\ldots,\alpha_n)\neq 0$, we have
$s(\alpha_1,\ldots,\alpha_n)\neq 0$. Then we have a new
near--separated representation:
\[\left(r\,s(\alpha_1,\ldots,\alpha_n),\displaystyle{\frac{s}{s(\alpha_1,\ldots,\alpha_n)}}\right).\]

\item [--] If $s(\alpha_1,\ldots,\alpha_n)=0$, then the representation $(-s,r)$ we are in the previous case.

\item [--] If $r(\alpha_1,\ldots,\alpha_n),s(\alpha_1,\ldots,\alpha_n)\neq 0$, then we
consider the representation
\[\left(r\,s(\alpha_1,\ldots,\alpha_n)-s\,r(\alpha_1,\ldots,\alpha_n),
\displaystyle{\frac{s}{s(\alpha_1,\ldots,\alpha_n)}}\right).\]
\end{description}

{\bf (iv)} This is a simple routine confirmation.
\end{proof}

Now, we state an important theorem that relates
uni--multi\-variate decompositions to near--separated polynomials,
that is proved in \cite{Schi95}:

\begin{theorem}\label{schicho}
Let $\mathbb{A}=\mathbb{K}(\mathbf{x})$ and
$\mathbb{B}=\mathbb{K}(\mathbf{y})$ be rational function fields
over $\mathbb{K}$. Let $f,h\in \mathbb{A}$ and $f',h'\in
\mathbb{B}$ be non--constant rational functions. Then the
following statements are equivalent:
\begin{description}
\item[A)] There exists a rational function $g\in \mathbb{K}(t)$ satisfying
$f=g(h)$ and $f'=g(h')$.
\item[B)] $h-h'$ divides $f-f'$ in $\mathbb{A}\otimes_\mathbb{K} \mathbb{B}$.
\end{description}
\end{theorem}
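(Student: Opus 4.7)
For A $\Rightarrow$ B the plan is a direct computation. Write $g = P/Q$ in lowest terms with $P, Q \in \mathbb{K}[t]$. Since the two-variable polynomial $P(u)Q(v) - P(v)Q(u) \in \mathbb{K}[u,v]$ vanishes on the diagonal $u = v$, it factors as $(u-v)\Phi(u,v)$ for some $\Phi \in \mathbb{K}[u,v]$. Substituting $u=h$, $v=h'$ yields
\[
f - f' \;=\; g(h) - g(h') \;=\; \frac{P(h)Q(h') - P(h')Q(h)}{Q(h)Q(h')} \;=\; (h-h')\cdot\frac{\Phi(h,h')}{Q(h)Q(h')},
\]
and the cofactor lies in $\mathbb{A}\otimes_\mathbb{K}\mathbb{B}$ because $Q(h), Q(h')$ are units in $\mathbb{A}$ and $\mathbb{B}$ respectively, and $\Phi(h,h')$ is a polynomial combination of $h\in\mathbb{A}$ and $h'\in\mathbb{B}$.

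The reverse direction B $\Rightarrow$ A is the substantive one. Starting from $f - f' = (h-h')W$ with $W\in\mathbb{A}\otimes_\mathbb{K}\mathbb{B}$, the idea is to extract $g$ by specialization. For a generic point $\mathbf{y}_0\in\overline{\mathbb{K}}^n$ (avoiding the finitely many hypersurfaces where a denominator of $h'$, $f'$, or $W$ vanishes), the map $\mathrm{id}_\mathbb{A}\otimes\mathrm{ev}_{\mathbf{y}_0}$ transports the divisibility to the identity
\[
f(\mathbf{x}) - f'(\mathbf{y}_0) \;=\; \bigl(h(\mathbf{x}) - h'(\mathbf{y}_0)\bigr)\,W(\mathbf{x},\mathbf{y}_0)
\]
in $\overline{\mathbb{K}}(\mathbf{x})$. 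Evaluating at any generic $\mathbf{x}_0$ in the fiber $h^{-1}(h'(\mathbf{y}_0))$ then forces $f(\mathbf{x}_0) = f'(\mathbf{y}_0)$. Because $h'$ is non-constant, its image is a cofinite subset of $\overline{\mathbb{K}}$, so for every $\lambda$ in a dense open subset of $\mathbb{A}^1$ the function $f$ is constant on the fiber $h^{-1}(\lambda)$.

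From this I would derive $f\in\mathbb{K}(h)$ geometrically. The rational map $\Psi=(h,f):\mathbb{A}^n\dashrightarrow\mathbb{A}^2$ has an irreducible image closure $\mathcal{C}$, and constancy of $f$ on generic fibers of $h$ means that the first projection $\mathrm{pr}_1:\mathcal{C}\to\mathbb{A}^1$ is generically injective, hence birational. Therefore $\mathbb{K}(\mathcal{C}) = \mathbb{K}(h)$, whence $f\in\mathbb{K}(h,f)=\mathbb{K}(\mathcal{C})=\mathbb{K}(h)$; this produces the unique $g\in\mathbb{K}(t)$ with $f = g(h)$. The symmetric argument with the roles of $\mathbf{x}$ and $\mathbf{y}$ exchanged gives $g_1\in\mathbb{K}(t)$ with $f' = g_1(h')$, and evaluating both representations on matched pairs with $h(\mathbf{x}_0) = h'(\mathbf{y}_0) = \lambda$ forces $g(\lambda) = g_1(\lambda)$ on a dense set, so $g = g_1$.

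The main obstacle is the step from ``constant on generic fibers'' to $f\in\mathbb{K}(h)$ rather than mere algebraic dependence of $h$ and $f$; this relies on the irreducibility of the image curve and on the fact that a generically-injective dominant morphism between irreducible varieties is birational. A secondary subtlety, easily handled, is that when $\mathbb{K}$ is small one must first pass to $\overline{\mathbb{K}}$ before specializing; the divisibility hypothesis is preserved under this faithfully flat base change.
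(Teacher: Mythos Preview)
The paper does not give its own proof of this theorem; it is quoted from Schicho's paper \cite{Schi95}. So there is no in-paper argument to compare against, and your sketch must stand on its own merits.

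Your A $\Rightarrow$ B computation is clean and correct: the factorisation of $P(u)Q(v)-P(v)Q(u)$ by $u-v$ and the observation that $Q(h)\otimes 1$ and $1\otimes Q(h')$ are units in $\mathbb{A}\otimes_\mathbb{K}\mathbb{B}$ are exactly what is needed.

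For B $\Rightarrow$ A, your geometric approach is sound in characteristic zero, but the step you yourself single out as the crux --- ``a generically-injective dominant morphism between irreducible varieties is birational'' --- is \emph{false} in positive characteristic. The Frobenius $\mathbb{A}^1\to\mathbb{A}^1$, $t\mapsto t^p$, is bijective on $\overline{\mathbb{K}}$-points yet has degree $p$. In the present situation this would correspond to a pair with $h=\phi^p$, $f=\phi$ for some $\phi\in\mathbb{K}(\mathbf{x})$: then $\mathcal{C}=\{u=v^p\}$ and $\mathrm{pr}_1$ is bijective but not birational. One can check that hypothesis B actually \emph{fails} for such a pair, since $h-h'=(\phi-\phi')^p$ while $f-f'=\phi-\phi'$, and $\phi-\phi'$ is not a unit in $\mathbb{A}\otimes_\mathbb{K}\mathbb{B}$ (its numerator $\phi_n(\mathbf{x})\phi_d(\mathbf{y})-\phi_d(\mathbf{x})\phi_n(\mathbf{y})$ has no factor lying in $\mathbb{K}[\mathbf{x}]$ or $\mathbb{K}[\mathbf{y}]$ alone, by Theorem~\ref{prop-casisep}(i)). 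So the theorem survives, but your argument as written does not use this and therefore has a genuine gap over fields of positive characteristic. Since the paper works over an arbitrary $\mathbb{K}$, you must either restrict to characteristic zero or replace the fibre-constancy argument by one that rules out inseparability in the extension $\mathbb{K}(h,f)/\mathbb{K}(h)$.

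A smaller point: after obtaining $g$ over $\overline{\mathbb{K}}$ you assert $g\in\mathbb{K}(t)$ without justification. This is true, because the equation $f=g(h)$ determines the coefficients of $g$ by a homogeneous linear system over $\mathbb{K}$ (compare Corollary~\ref{compute-g}), and a nontrivial solution over $\overline{\mathbb{K}}$ forces one over $\mathbb{K}$; but it deserves a sentence.
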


As a consequence, a rational function $f\in\mathbb{K}(\mathbf{x})$
verifies $f=g(h)$ for some $g,h$ if and only if $
h_n(\mathbf{x})\:h_d(\mathbf{y})-h_d(\mathbf{x})\:h_n(\mathbf{y})$
divides $
f_n(\mathbf{x})\:f_d(\mathbf{y})-f_d(\mathbf{x})\:f_n(\mathbf{y})$.

Given an admissible monomial ordering $>$ in a polynomial ring and
a nonzero polynomial $G$ in that ring, we denote by $\mathrm{lm}\
G$ the leading monomial of $G$ with respect to $>$ and
$\mathrm{lc}\ G$ its leading coefficient.

\begin{algorithm}\label{alg-netto}
\
\begin{description}
\item[Input:] $f_1,\ldots,f_m\in \mathbb{K}(\mathbf{x})$.
\item[Output:] $f\in \mathbb{K}(\mathbf{x})$ such that
$\mathbb{K}(f)=\mathbb{F}=\mathbb{K}(f_1,\ldots,f_m)$, if it
exists. Otherwise, returns null.
\end{description}
\begin{description}
\item[A] Let $>$ be a graded lexicographical ordering for\linebreak
$\mathbf{y}=(y_1,\ldots,y_n)$. Let $i=m$.
\item[B] Let
$F_k={f_k}_n(\mathbf{y})-f_k(\mathbf{x}){f_k}_d(\mathbf{y})$ for
$k=1,\ldots,i$.
\item[C] Compute $H_i=\mathrm{gcd}(\{F_k,\ k=1,\ldots,i\})$ with $\mathrm{lc}\ H_i=1$.
\item[D]
\begin{description}
\item [--] If $H_i=1$, \textbf{RETURN NULL} ($\mathbb{F}$ does not have transcendence degree $1$ over $\mathbb{K}$).
\item [--] If there exists $j\in\{1,\ldots,i\}$ such that $\mathrm{lm}\ H_i=\mathrm{lm}\ F_j$, then \textbf{RETURN} $f_j$.
\item [--] Otherwise, let $f_{i+1}$ be a coefficient of $H_i$ in
$\mathbb{F}\setminus\mathbb{K}$. Increase $i$ and go to
\textbf{B}.
\end{description}
\end{description}
\end{algorithm}
{\bf Correctness proof.} If $\mathbb{F}$ has transcendence degree
$1$ over $\mathbb{K}$, we can write $\mathbb{F}=\mathbb{K}(f)$. By
Theorem \ref{schicho},
\[f_n(\mathbf{y})-f(\mathbf{x})f_d(\mathbf{y})\] divides
$H_i$ for any $i$. Therefore, $H_i$ is non--constant if a L\"uroth
generator exists.

If there are $i,j$ such that $\mathrm{lm}\ H_i=\mathrm{lm}\ F_j$,
then $F_j$ is a greatest common divisor of $\{F_k,
k=1,\ldots,i\}$. Therefore, $F_j$ divides $F_k$ for every $k$. Fix
such a $k$. Let $q=\overline{{f_k}_n(\mathbf{y})}^{\{F_j\}},
s=\overline{{f_k}_d(\mathbf{y})}^{\{F_j\}}$ the normal form with
respect to the monomial ordering $>$; then there exist $p,q,r,s\in
\mathbb{F}[\mathbf{y}]$ such that
\[\begin{array}{l}
{f_j}_n(\mathbf{y})=p(\mathbf{y})F_i-q(\mathbf{y}) \\
{f_j}_d(\mathbf{y})=r(\mathbf{y})F_i-s(\mathbf{y})
\end{array}\]
where $\mathrm{lm}\ F_j$ does not divide any monomial of $q$ or
$s$. By theorem \ref{prop-casisep}(i), $q,s\neq 0$. By the
definition in step B,
\[F_k=F_j(p-r\,f_k(\mathbf{x}))-(q-s\,f_k(\mathbf{x})).\]
Hence $F_j$ divides $q-s\,f_k(x_1,\ldots,x_n)$ and we conclude
that $q-s\,f_k(\mathbf{x})=0$, since otherwise we would get that
$\mathrm{lm}\ F_j$ divides $\mathrm{lm}\
(q-s\,f_k(x_1,\ldots,x_n))$, which contradicts the choice of the
polynomials $q,s$. Thus
$f_k(x_1,\ldots,x_n)=\displaystyle{\frac{q}{s}}\in
\mathbb{F}=\mathbb{K}(f_j)$.

Now we suppose that $\mathrm{lm}\ H_i<\ \mathrm{lm}\ F_k$ for all
$k$. Again, fix a value for $k$. Then there exists a $C\in
\mathbb{F}[\mathbf{y}]\setminus\mathbb{F}$ such that $F_k=H_iC$.
Let $d,\alpha$ be the lowest common multiples of the denominators
of the coefficients of $H_i$ and $C$, respectively. Then
$D=H_id,C'=\alpha C\in \mathbb{K}[\mathbf{x},\mathbf{y}]$. Since
$H_i$ is monic, the polynomial $D$ is primitive. Then,
\[{f_k}_n(\mathbf{y})\:{f_k}_d(\mathbf{x})-{f_k}_n(\mathbf{x})\:{f_k}_d
(\mathbf{y})=\displaystyle{\frac{D}{d}}\:\displaystyle{\frac{C'}{\alpha}}\:{f_k}_d\]
and by theorem \ref{prop-casisep},
\[{f_k}_n(\mathbf{y})\:{f_k}_d(\mathbf{x})-{f_k}_n(\mathbf{x})\:{f_k}_d
(\mathbf{y})=D\widehat{C},\] $\widehat{C}\in
\mathbb{K}[\mathbf{x},\mathbf{y}]$. On one hand, $D\not\in
\mathbb{K}[\mathbf{y}]$, thus $D$ (and $H_i$) have a non--constant
coefficient. On the other hand, $\widehat{C}\not\in
\mathbb{K}[\mathbf{y}]$, then the non--constant coefficients of
$D$ in the ring $\mathbb{K}(\mathbf{x})[\mathbf{y}]$ have smaller
degree than that of $f_k(\mathbf{x})$. The choice of $d$ assures
that the coefficients of $H$ have smaller degree than $f_k$.
Summarizing, there exists a coefficient $a\in \mathbb{F}$ of $H_i$
that can be added to the list of generators and has smaller degree
than them. If tr.deg $(\mathbb{F}/\mathbb{K})=1$, $H_i$ is
non--constant for all $i$, and the generator has smaller degree
than the others. Therefore, the algorithm ends in a finite number
of steps. \par

Finally, we note that complexity is dominated in the step {\bf C}
by  computing $\mathrm{gcd}$'s of multivariate polynomials, so the
algorithm is polynomial in the degree of the rational functions
and in $n$ (see \cite{GG99}).

From the fact that the L\"uroth generator can be found with only
some $\mathrm{gcd}$ computations, we obtain that if $f$ is a
L\"uroth generator of $\mathbb{K}(f_1,\ldots,f_n)$ then it is also
a L\"uroth generator of $\mathbb{K'}(f_1,\ldots,f_n)$ for any
field extension $\mathbb{K'}$ of $\mathbb{K}$,
$\mathbb{K}\subset\mathbb{K'}$.

\
\begin{example}
Let $\mathbb{Q}(f_1,f_2)\subset \mathbb{Q}(x,y,z)$ where
\[\begin{array}{l}
f_1=\displaystyle{\frac{y^2x^4-2y^2x^2z+y^2z^2+x^2-2xz+z^2}{yx^3-yxz-yzx^2+z^2y}}
\\
\\
f_2=\displaystyle{\frac{y^2x^4-2y^2x^2z+y^2z^2}{x^2-2xz+yx^3-yxz+z^2-yzx^2+z^2y}}.
\end{array}
\]
Let
\[
F_i={f_i}_n(s,t,u)-f_i(x,y,z){f_i}_d(s,t,u)\,,\ \ i=1,2.
\]
Compute
\[
H_2=\mathrm{gcd}(F_1,F_2)=-tu+s^2t+\displaystyle{\frac{x^2y-zy}{x-z}
}u+\displaystyle{\frac{-x^2y+zy}{x-z}}s.
\]
Since $\mathrm{lm}\ H_2<\mathrm{lm}\ F_i$ with respect to the
lexicographical ordering $s>t>u$, we take a non--constant
coefficient of $H_2$: $f_3=\displaystyle{\frac{x^2y-zy}{x-z}}$.
Now
\[H_3=-tu+s^2t+
\displaystyle{\frac{x^2y-zy}{x-z}}u+\displaystyle{\frac{-x^2y+zy}{x-z}}s
\]
and $H_3=F_3$, since $H_3=H_2$. The algorithm returns $f_3$, a
L\"uroth generator of $\mathbb{Q}(f_1,f_2)$.
\end{example}

It is important to highlight that when the field $\mathbb{F}$
contains a non--constant  polynomial you can  compute a polynomial
as a generator, and this generator neither depends on the ground
field $\mathbb{K}$. This result  was proved in \cite{Noe15}, for
zero characteristic. A general proof can be
 found in \cite{Sch82}.

\begin{algorithm}\label{alg-poly}
\
\begin{description}
\item[Input:] $f_1,\ldots,f_m\in \mathbb{K}(\mathbf{x})$.
\item[Output:] $f\in \mathbb{K}[\mathbf{x}]$ such that
$\mathbb{K}(f)=\mathbb{F}=\mathbb{K}(f_1,\ldots,f_m)$, if it
exists. Otherwise, returns null.
\end{description}
\begin{description}
\item[A] Compute a L\"uroth generator $f$ of $\mathbb{K}(f_1,\dots,f_m)$
using Algorithm~\ref{alg-netto}.
\item[B]  Let  $s$ be the degree of $f'$.
\begin{description}
\item [---] If $s>\mathrm{deg}\ f_n ' $ and $f_n ' $  is not constant, return null. Otherwise, let $f=1/f ' $.
\item [---] If $s>\mathrm{deg}\ f_d ' $ and $f_d ' $  is not constant, return null. Otherwise, let $f=f ' $.
\item [---] Let ${f_n^{(s)}} ', {f_d^{(s)}} ' $ be the homogeneous
components of degree $s$ of $f_n',f_d ' $, respectively. Let
$a=\displaystyle{\frac {{f_n^{(s)}} '} {{f_d^{(s)}} '}} $. If $a $
or $f_n'-af_d ' $  are not constant, return null. \par
 Otherwise,
let $f=\displaystyle{\frac{1}{y-a}} \circ f' $.
\end{description}
\end{description}
\end{algorithm}
{\bf Correctness proof.} Once a L\"uroth's generator   has been
computed, take a generator $f $ with  degree $m $ such that if
$f=\displaystyle { \frac {f_n} {f_d}} $ and

\[
\begin{array}{rl}
f_n&=f_n^{(s) }+\cdots+f_n^{(0)}, \\ f_d&=f_d^{(s)
}+\cdots+f_d^{(0)},
\end{array}
\]
\noindent the sum in homogeneous polynomials, then either
$f_d^{(s)} =0 $  or $\displaystyle { \frac{f_n^{(s)}} {f_d^{(s)}}
} \not\in \mathbb{K}$.

\noindent If $p \in \mathbb{K}(f)$ is a polynomial, then there
exists $g\in \mathbb{K}(y) $ with degree $r $ such that $p=g(f) $.
If $g=\displaystyle{\frac {a_ry^r+\cdots+a_0}{b_ry^r+\cdots+b_0}}
$,
\[
\begin{array}{rl}
p&=\displaystyle{\frac {a_rf_n^r+\cdots+a_0f_d^r}
{b_rf_n^r+\cdots+b_0f_d^r} } \\ & = \displaystyle{\frac
{a_r{(f_n^{(s) }+\cdots+f_n^{(0)})} ^r+\cdots+a_0(f_d^{(s)
}+\cdots+f_d^{(0)}) ^r} {b_r(f_n^{(s) }+\cdots+f_n^{(0)})
^r+\cdots+b_0(f_d^{(s) }+\cdots+f_d^{(0)}) ^r} }.
\end{array} \]

\noindent Since $p $  is a polynomial, the degree of the previous
denominator
  is smaller than the degree of the numerator. Therefore
$b_r{f_n^{(s)}} ^r+\cdots+b_0{f_d^{(s)}} ^r=0 $.

\noindent If $f_d^{(s)} =0 $ then $b_r=0 $ and
$p=\displaystyle{\frac {a_rf_n^r+\cdots+a_0f_d^r}
{f_d(b_{r-1}f_n^{r-1}+\cdots+b_0f_d^{r-1})}}. $  Hence $f_d $
divides the numerator of p, and therefore divides $f_n $. This
proves that $f $  is a polynomial.

\noindent If, on the contrary, $f_d^{(s)} \not=0 $,
$g_d\left(\displaystyle { \frac{f_n^{(s)}} {f_d^{(s)}}} \right)=0
$. Contradiction, since $\displaystyle { \frac{f_n^{(s)}}
{f_d^{(s)} }} \not\in \mathbb{K} $.

\section{Two uni--multivariate decomposition algorithms}

 We define the \textbf{degree} of a rational
function $f=f_n/f_d \in \mathbb{K}(\mathbf{x})$ as $\mathrm{deg}\
f=\mathrm{max}\ \{\mathrm{deg}\ f_n,\mathrm{deg}\ f_d\}$ if
$\mathrm{gcd}(f_n,f_d)=1$.  The following definition was
introduced in \cite{Gat90} for polynomials.
\begin{definition} Let
$f,h \in \mathbb{K}(\mathbf{x})$ and $g \in \mathbb{K}(y)$ such
that $f=g(h)$. Then we say that $(g,h)$ is a uni-multivariate
decomposition of $f$. It is non-trivial if $ 1 <\mathrm{deg}\  h <
\mathrm{deg}\ f$. The rational function is uni-multivariate
decomposable if there exits a non-trivial decomposition.
\end{definition}

If $f$ is a polynomial having a nontrivial uni-multivariate
decomposition, then  by Algorithm \ref{alg-poly} we get that there
exits a uni-multivariate decomposition $(g,h)$ with $g$ and $h$
polynomials. The paper \cite{Gat90} provides an algorithm to
compute a nontrivial uni-multivariate decomposition of a
polynomial $f$ of degree $m$ that only requires $O(nm(m+1)^n\log
\; m) $ arithmetic operations in the ground field $\mathbb{K}$.

The known techniques for decomposition all divide the problem into
two parts. Given $f$, in order to find a decomposition $f=g(h)$,
\begin{enumerate}
\item
  \label{1.}
  one first computes candidates $h$,
\item
  \label{2.}
  then computes $g$ given $h$.
\end{enumerate}

Determining $g$ from $f$ and $h$ is a subfield membership problem.
The paper \cite{Swe93} gives a solution to this part. We also
present another faster method, that only requires solving a linear
system of equations. Usually, the harder step is to find
candidates for $h$. One goal in decomposition is to have
components of smaller degree than the composed polynomial. This
will be the case here.

\subsection{Preliminary results}

First, we state some results that will be used in the algorithms
presented later. On the properties to highlight out of
uni-multivariate decomposition is the good behaviour of the degree
with respect to this composition.

\begin{theorem}\label{muldeg}
Let $g\in \mathbb{K}(y)$ and $h\in \mathbb{K}(\mathbf{x})$, and
$f=g(h)$. Then $\mathrm{deg}\ f=\mathrm{deg}\ g\cdot\mathrm{deg}\
h$.
\end{theorem}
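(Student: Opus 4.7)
The plan is to compute $f = g(h)$ explicitly as a quotient and then analyse the possible cancellations and degree drops. Writing $g = g_n/g_d$ and $h = h_n/h_d$ in lowest terms with $r = \mathrm{deg}\ g$ and $s = \mathrm{deg}\ h$, and expanding $g_n(y) = \sum_{i=0}^r a_i y^i$ and $g_d(y) = \sum_{i=0}^r b_i y^i$, direct substitution yields
\[
f = \frac{P}{Q}, \qquad P = \sum_{i=0}^r a_i\, h_n^i h_d^{r-i}, \qquad Q = \sum_{i=0}^r b_i\, h_n^i h_d^{r-i}.
\]
The crude bound $\mathrm{deg}(h_n^i h_d^{r-i}) \le rs$ gives $\mathrm{deg}\ P,\ \mathrm{deg}\ Q \le rs$ at once. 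The argument then splits into two claims: (i) the fraction $P/Q$ is already in lowest terms, so $\mathrm{deg}\ f = \max(\mathrm{deg}\ P,\mathrm{deg}\ Q)$; and (ii) this maximum equals $rs$.

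For claim (i), I would argue that any irreducible $p \in \mathbb{K}[\mathbf{x}]$ dividing both $P$ and $Q$ produces a contradiction. If $p \nmid h_d$, then passing to the residue field at a generic root of $p$ makes $h_n/h_d$ a common root of $g_n$ and $g_d$, contradicting $\mathrm{gcd}(g_n,g_d)=1$. If $p \mid h_d$, then $p \nmid h_n$ by $\mathrm{gcd}(h_n,h_d)=1$, and reducing $P$ and $Q$ modulo $p$ collapses them to $a_r h_n^r$ and $b_r h_n^r$; this forces $a_r = b_r = 0$, contradicting $\mathrm{deg}\ g = r$.

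For claim (ii), I would analyse the degree-$rs$ homogeneous component of $P$ and $Q$ in cases according to $u = \mathrm{deg}\ h_n$ and $v = \mathrm{deg}\ h_d$, writing $H_n, H_d$ for the top-degree forms of $h_n, h_d$. When $u > v$ only the term $i = r$ contributes in top degree, producing $a_r H_n^r$ and $b_r H_n^r$; at least one of $a_r, b_r$ is nonzero because $\mathrm{deg}\ g = r$. The case $v > u$ is symmetric, using $a_0, b_0$ and the coprimality of $g_n, g_d$. The main obstacle is the case $u = v = s$: here every $i$ contributes in top degree, and the degree-$rs$ components equal $H_d^r\, g_n(H_n/H_d)$ and $H_d^r\, g_d(H_n/H_d)$, which could a priori both vanish. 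If $H_n/H_d \in \mathbb{K}$ this is ruled out by $\mathrm{gcd}(g_n,g_d)=1$; otherwise, since $\mathbb{K}$ is algebraically closed in $\mathbb{K}(\mathbf{x})$, the element $H_n/H_d$ is transcendental over $\mathbb{K}$, and no nonzero univariate polynomial over $\mathbb{K}$ can vanish on a transcendental. Assembling the cases gives $\mathrm{deg}\ f = rs = \mathrm{deg}\ g \cdot \mathrm{deg}\ h$.
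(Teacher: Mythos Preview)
Your proof is correct and follows essentially the same strategy as the paper's: write $f=P/Q$ with $P=\sum a_i h_n^i h_d^{r-i}$ and $Q=\sum b_i h_n^i h_d^{r-i}$, show that $P/Q$ is in lowest terms, and show that $\max(\deg P,\deg Q)=rs$ by examining the top homogeneous components. The technical details differ slightly: for coprimality the paper uses a Bezout identity $g_nA+g_dB=1$, homogenizes it, and substitutes $(h_n,h_d)$, whereas you argue directly by reducing modulo a hypothetical irreducible common factor; and in the case $\deg h_n=\deg h_d$ the paper first deduces $H_n/H_d\in\mathbb{K}$ and then replaces $h$ by an equivalent $h'$ with $\deg h'_n\neq\deg h'_d$ to reduce to the easy case, while you handle the equal-degree case in one stroke by observing that $H_n/H_d$ is either a constant (and then $\gcd(g_n,g_d)=1$ suffices) or transcendental over $\mathbb{K}$ (and then neither $g_n$ nor $g_d$ can vanish on it). Both arguments implicitly use that $\mathbb{K}$ is algebraically closed in $\mathbb{K}(\mathbf{x})$; your version is somewhat more streamlined in that it avoids the homogenization machinery and the auxiliary change of variables.
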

\begin{proof} Let $g=\displaystyle{\frac{g_n}{g_d}}$ and
$h=\displaystyle{\frac{h_n}{h_d}}$ with $\mathrm{gcd}(g_n,g_d)=1$
and $\mathrm{gcd}(h_n,h_d)=1$. Then there exist polynomials
$A,B\in \mathbb{K}[y]$ such that
\[g_n(y)\,A(y)+g_d(y)\,B(y)=1.\] Homogenizing the polynomials
$g_n,g_d,A,B$ we obtain, respectively, the bivariate polynomials
$\widetilde{g}_n(y_1,y_2),\widetilde{g}_d(y_1,y_2)$,
$\widetilde{A}(y_1,y_2),\widetilde{B}(y_1,y_2)$ verifying
\[\widetilde{g}_n(y_1,y_2)\:\widetilde{A}(y_1,y_2)\:y_2^u+\widetilde{g}_d(y_1,y_2)\:\widetilde{B}(y_1,y_2)\:y_2^v=y_2^w\]
with either $u=0$ or $v=0$ and $w=$max$\{u,v\}$.
Therefore,\[\widetilde{g}_n(h_n,h_d)\:\widetilde{A}(h_n,h_d)\:h_d^u+\widetilde{g}_d(h_n,h_d)\:\widetilde{B}(h_n,h_d)\:h_d^v=h_d^w.\]
If $d$ is an irreducible factor of
$\mathrm{gcd}(\widetilde{g}_n(h_n,h_d),\widetilde{g}_d(h_n,h_d))$,
then $d$ divides $h_d$. On the other hand, $d$ divides
$\widetilde{g}_n(h_n,h_d)$ and $\widetilde{g}_d(h_n,h_d)$; this
implies that $d$ divides $h_n$. As a consequence,
$\mathrm{gcd}(\widetilde{g}_n(h_n,h_d),\widetilde{g}_d(h_n,h_d))=1.$
So,
\[f=\displaystyle{\frac{\widetilde{g}_n(h_n,h_d)}{\widetilde{g}_d(h_n,h_d)}}\:h_d^a\,,\
|a|=|\mathrm{deg}\ h_n-\mathrm{deg}\ h_d|\] is in reduced form.
Without loss of generality, we can take deg $g_n=r_n\geq r_d=$deg
$g_d$ with
\[
\begin{array}{l}
g_n(y)=a_{r_n}\,y^{r_n}+\cdots+a_0 \\
g_d(y)=b_{r_d}\,y^{r_d}+\cdots+b_0.
\end{array}
\]
Then, $\mathrm{deg}\ f=\mathrm{max}\ \{\mathrm{deg}\
\widetilde{g}_n(h_n,h_d),\mathrm{deg}\
\widetilde{g}_d(h_n,h_d)\,h_d^{r_n-r_d}\}$ and
\[\begin{array}{l}
\widetilde{g}_n(h_n,h_d)=a_{r_n}h_n^{r_n}+\cdots+a_0h_d^{r_n} \\
\widetilde{g}_d(h_n,h_d)=b_{r_d}h_n^{r_d}+\cdots+b_0h_d^{r_d}.
\end{array}
\]
If $\mathrm{deg}\ \widetilde{g}_n(h_n,h_d)=r_n\,\mathrm{deg}\ h$,
we immediately obtain that $\mathrm{deg}\ f=\mathrm{deg}\
g\,\mathrm{deg}\ h$.
If $\mathrm{deg}\ \widetilde{g}_n(h_n,h_d)<r_n\mathrm{deg}\ h$,
then $s=\mathrm{deg}\ h_n=\mathrm{deg}\ h_d$. Write
\[\begin{array}{l}
h_n=h_n^{(s)}+h_n^{(s-1)}+\cdots+h_n^{(0)} \\
h_d=h_d^{(s)}+h_d^{(s-1)}+\cdots+h_d^{(0)}
\end{array}
\]
where $h_n^{(j)},h_d^{(j)}$ are the homogeneous components of
$h_n$ and $h_d$ with degree $j$, respectively. Since the degree
decreases, $\mathrm{deg}\ \widetilde{g}_n(h_n^{(s)},h_d^{(s)})=0$
and $\mathrm{deg}\
\widetilde{g}_n\left(\displaystyle{\frac{h_n^{(s)}}{h_d^{(s)}}}\right)=0$.
Therefore, $\displaystyle{\frac{h_n^{(s)}}{h_d^{(s)}}}\in
\mathbb{K}$. In this case, you can take $h'\in
\mathbb{K}(\mathbf{x})$ a rational function with $\mathrm{deg}\
h=\mathrm{deg} h',\ \mathrm{deg}\ h_n'\neq\mathrm{deg}\ h'_d$ and
such that $f=g'(h')$ for some $g'\in\mathbb{K}(y)$ with
$\mathrm{deg}\ g=\mathrm{deg}\ g'$. Under these hypothesis, we
proved before that $\mathrm{deg}\ f=\mathrm{deg}\
g'\,\mathrm{deg}\ h'=\mathrm{deg}\ g\,\mathrm{deg}\ h$.
\end{proof}
\begin{corollary}\label{primeform}
Let $g=g_n/g_d$ with $g_n=a_uy^u+\cdots+a_0,g_d=b_vy^v+\cdots+b_0$
and $h=h_n/h_d$ verifying
$\mathrm{gcd}(g_n,g_d)=\mathrm{gcd}(h_n,h_d)=1$. If
$f=f_n/f_d=g(h)$ with
\[
\begin{array}{l}
f_n=(a_u\,h_n^u+\cdots+a_0\,h_d^u)\,h_d^{\mathrm{max}\{v-u,0\}} \\
f_d=(b_v\,h_n^v+\cdots+b_0\,h_d^v)\,h_d^{\mathrm{max}\{u-v,0\}}
\end{array}
\]
then $\mathrm{gcd}(f_n,f_d)=1$.
\end{corollary}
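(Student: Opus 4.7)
The plan is to reduce the claim to two coprimality statements that are essentially already contained in the proof of Theorem \ref{muldeg}. Let $\widetilde{g}_n(y_1, y_2) = a_u y_1^u + a_{u-1} y_1^{u-1} y_2 + \cdots + a_0 y_2^u$ and $\widetilde{g}_d(y_1, y_2) = b_v y_1^v + \cdots + b_0 y_2^v$ denote the standard bihomogenizations of $g_n, g_d$; then the expressions given for $f_n, f_d$ can be rewritten as
\[
f_n = \widetilde{g}_n(h_n, h_d)\, h_d^{\max\{v-u,\,0\}}, \qquad f_d = \widetilde{g}_d(h_n, h_d)\, h_d^{\max\{u-v,\,0\}},
\]
where at most one of the two exponents is positive. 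Thus $\gcd(f_n, f_d) = 1$ reduces to the conjunction of (a) $\gcd(\widetilde{g}_n(h_n, h_d),\, \widetilde{g}_d(h_n, h_d)) = 1$, and (b) coprimality of $h_d$ with $\widetilde{g}_n(h_n, h_d)$ and with $\widetilde{g}_d(h_n, h_d)$.

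For (b) I would argue directly by reduction modulo an irreducible $p$ dividing $h_d$: every term of $\widetilde{g}_n(h_n, h_d) = a_u h_n^u + a_{u-1} h_n^{u-1} h_d + \cdots + a_0 h_d^u$ except the leading one vanishes, giving $\widetilde{g}_n(h_n, h_d) \equiv a_u h_n^u \pmod{p}$. Since $a_u \in \mathbb{K}^\times$, having $p \mid \widetilde{g}_n(h_n, h_d)$ would force $p \mid h_n$, contradicting $\gcd(h_n, h_d) = 1$; the same argument with $b_v$ in place of $a_u$ handles $\widetilde{g}_d(h_n, h_d)$.

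For (a) I would repeat the Bezout/homogenization step used in the proof of Theorem \ref{muldeg}: starting from $g_n A + g_d B = 1$, homogenization produces an identity
\[
\widetilde{g}_n(y_1, y_2)\, \widetilde{A}(y_1, y_2)\, y_2^{\alpha} + \widetilde{g}_d(y_1, y_2)\, \widetilde{B}(y_1, y_2)\, y_2^{\beta} = y_2^{w}
\]
for suitable $\alpha, \beta, w \geq 0$, and evaluating at $(y_1, y_2) = (h_n, h_d)$ shows that any common irreducible factor of $\widetilde{g}_n(h_n, h_d)$ and $\widetilde{g}_d(h_n, h_d)$ divides $h_d^{w}$, hence divides $h_d$; but (b) already rules this out. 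The main content is the identification of $f_n, f_d$ with the bihomogenized substitutions together with the two identities above, so I do not expect any genuine obstacle — the only care needed is bookkeeping across the three cases $u > v$, $u = v$, $u < v$ to place the extra $h_d^{|u-v|}$ factor correctly.
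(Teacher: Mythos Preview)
Your argument is correct, but the paper takes a different and much shorter route. Instead of re-proving coprimality from scratch, it simply observes that the explicit expressions for $f_n$ and $f_d$ satisfy $\deg f_n,\ \deg f_d \leq \max\{u,v\}\cdot\max\{\deg h_n,\deg h_d\}=\deg g\cdot\deg h$; hence if $\gcd(f_n,f_d)\neq 1$, cancelling the common factor would give $\deg f < \deg g\cdot\deg h$, contradicting Theorem~\ref{muldeg}. So the paper uses the multiplicativity of degree as a black box, while you unwind the Bezout/homogenization step that sits inside the proof of Theorem~\ref{muldeg} and supplement it with your explicit argument (b). The paper's approach is slicker; yours is more self-contained and, in fact, spells out cleanly the step ``$d\mid\widetilde{g}_n(h_n,h_d)$ and $d\mid h_d$ imply $d\mid h_n$'' that the proof of Theorem~\ref{muldeg} asserts without justification. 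One small remark: your reduction ``(a) and (b) $\Rightarrow$ $\gcd(f_n,f_d)=1$'' only needs, in each of the cases $u>v$, $u<v$, $u=v$, one half of (b) or none at all, but proving both halves does no harm and you use them anyway to finish (a).
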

\begin{proof}
It is easy to prove that \[\mathrm{deg}\ f_n,\mathrm{deg}\
f_d\leq\mathrm{max}\{u,v\}\cdot\mathrm{max}\{\mathrm{deg}\
h_n,\mathrm{deg}\ h_d\}.\] If $\mathrm{gcd}(f_n,f_d)\neq 1$, then
$\mathrm{deg}\ f<\mathrm{deg}\ g\,\mathrm{deg}\ h$, contradicting
theorem \ref{muldeg}.
\end{proof}
\begin{corollary}\label{compute-g}
Given $f$ and $h$, if there exists $g$ such that $f=g(h)$, then
$g$ is unique. Furthermore, it can be computed from $f$ and $h$ by
solving a linear system of equations.
\end{corollary}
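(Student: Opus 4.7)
The plan splits cleanly into the two assertions of the statement.

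For uniqueness, the key observation is that any non-constant $h \in \mathbb{K}(\mathbf{x})$ is transcendental over $\mathbb{K}$ (since $x_1,\ldots,x_n$ are algebraically independent, so $\mathbb{K}(h)$ cannot be algebraic over $\mathbb{K}$). Hence the substitution map $\mathbb{K}(y) \to \mathbb{K}(h)\subset \mathbb{K}(\mathbf{x})$ sending $g(y)\mapsto g(h)$ is a well-defined injective homomorphism of fields. If $g_1(h)=g_2(h)=f$, then $(g_1-g_2)(h)=0$, which forces $g_1=g_2$. This part is a one-liner once transcendence of $h$ is invoked.

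For the computation, I would first determine $d:=\deg g$ from the data: by Theorem \ref{muldeg}, $\deg g = \deg f/\deg h$, which is an integer (and if it is not, no $g$ exists). Writing $g_n = \sum_{i=0}^{d} a_i y^i$ and $g_d = \sum_{i=0}^{d} b_i y^i$ with the $a_i,b_i$ as unknowns, the identity $f=g(h)$ cross-multiplies (via the formulas in Corollary \ref{primeform}) to the polynomial equation
\[
f_n\cdot\Bigl(\sum_{i=0}^{d} b_i h_n^{i} h_d^{d-i}\Bigr) \;-\; f_d\cdot\Bigl(\sum_{i=0}^{d} a_i h_n^{i} h_d^{d-i}\Bigr) \;=\; 0
\]
in $\mathbb{K}[\mathbf{x}]$, which is linear and homogeneous in the $2d+2$ unknowns. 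Equating coefficients of each monomial in $\mathbf{x}$ yields a linear system over $\mathbb{K}$, and any nonzero solution provides $g_n$ and $g_d$. The uniqueness just proved shows that the solution space has dimension exactly $1$ (the scaling $g_n/g_d = \lambda g_n/\lambda g_d$), so normalizing one nonzero coefficient recovers $g$ uniquely.

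The main technical care point I anticipate is the bookkeeping between $u:=\deg g_n$ and $v:=\deg g_d$, since Corollary \ref{primeform} has the asymmetric factor $h_d^{\max(v-u,0)}$ or $h_d^{\max(u-v,0)}$ depending on which of $u,v$ is larger. The cleanest way around this is to pad both $g_n$ and $g_d$ up to the common formal degree $d$ and work with the symmetric identity written above; the coefficient matching then proceeds uniformly and the correct $u,v$ emerge a posteriori from the vanishing of leading $a_i$'s or $b_i$'s in the obtained solution.
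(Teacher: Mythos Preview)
Your proof is correct and follows the same outline as the paper's: determine $\deg g$ from Theorem~\ref{muldeg}, write $g$ with undetermined coefficients, and read off a homogeneous linear system from $f-g(h)=0$. The only difference is in the uniqueness step, where the paper also appeals to Theorem~\ref{muldeg} (from $(g_1-g_2)(h)=0$ it concludes $\deg(g_1-g_2)=0$, hence $g_1=g_2$), whereas you argue directly that $h$ is transcendental over $\mathbb{K}$ so the substitution $g\mapsto g(h)$ is injective; both are one-line arguments, and yours sidesteps the slight awkwardness of applying the degree formula to the zero function.
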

\begin{proof}
If $f=g_1(h)=g_2(h)$, then $(g_1-g_2)(h)=0$, and by theorem
\ref{muldeg}, $\mathrm{deg}\ (g_1-g_2)=0$, thus $g_1-g_2$ is
constant. It is then clear that it must be $0$, that is,
$g_1=g_2$. Again by theorem \ref{muldeg}, the degree of $g$ is
determined by those of $f$ and $h$. We can write $g$ as a function
with the corresponding degree and undetermined coefficients.
Equating to zero the coefficients of the numerator of $f-g(h)$, we
obtain a linear homogeneous system of equations in the
coefficients of $g$. If we compute a non--trivial solution to this
system, we find $g$.
\end{proof}

\begin{definition} Let  $f\in  \mathbb{K}(\mathbf{x})$ be a rational function. Two
uni--multivariate decompositions $(g,h)$ and $(g',h ')$ of $f $
are {\bf equivalent} if there exists a composition unit $l\in
\mathbb{K}(y)$, i.e., $\mathrm{deg}\ l =1$,  such that $h=l(h') $.
\end{definition}
\begin{corollary} \label{cuer-umfr} Let  $f\in \mathbb{K}(\mathbf{x}) $
 be a non--constant rational function. Then
the equivalence classes of uni--multivariate decompositions of $f$
correspond bijectively to intermediate fields $\mathbb{F}$,
$\mathbb{K}(f) \subset \mathbb{F} \subset \mathbb{K}(\mathbf{x})$,
with transcendence degree $1$ over $\mathbb{K} $.
\end{corollary}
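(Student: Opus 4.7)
The plan is to construct the bijection explicitly via the map $\Phi$ sending a uni--multivariate decomposition $(g,h)$ of $f$ to the subfield $\mathbb{K}(h)\subseteq\mathbb{K}(\mathbf{x})$, and to verify that $\Phi$ factors through the equivalence relation and is invertible.

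First I would check that $\Phi$ lands in the right target. Since $f=g(h)$ is non--constant, $h$ must be non--constant, so $\mathbb{K}(h)$ has transcendence degree $1$ over $\mathbb{K}$; and $\mathbb{K}(f)\subseteq\mathbb{K}(h)\subseteq\mathbb{K}(\mathbf{x})$ is immediate from $f=g(h)\in\mathbb{K}(h)$. The strictness of the inclusions on both sides of $\mathbb{F}$ corresponds exactly to the non--triviality bounds $1<\mathrm{deg}\ h<\mathrm{deg}\ f$ via Theorem~\ref{muldeg}. Next, well--definedness on equivalence classes: if $h=l(h')$ for some $l\in\mathbb{K}(y)$ of degree $1$, then $l$ is a M\"obius transformation with a degree--$1$ compositional inverse, hence $\mathbb{K}(h)=\mathbb{K}(h')$.

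For surjectivity I would invoke the Extended L\"uroth Theorem (Theorem~\ref{extended}): any intermediate field $\mathbb{F}$ of transcendence degree $1$ equals $\mathbb{K}(h)$ for some $h\in\mathbb{K}(\mathbf{x})$; since $f\in\mathbb{F}$, the existence of a (unique, by Corollary~\ref{compute-g}) $g\in\mathbb{K}(y)$ with $f=g(h)$ produces the desired decomposition mapped to $\mathbb{F}$ by $\Phi$.

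The main step, and the point I expect to require the most care, is injectivity. Suppose $(g,h)$ and $(g',h')$ are two decompositions of $f$ satisfying $\mathbb{K}(h)=\mathbb{K}(h')$. Then there exist $\phi,\psi\in\mathbb{K}(y)$ with $h'=\phi(h)$ and $h=\psi(h')$. Substituting yields $h=\psi(\phi(h))$; since $h$ is transcendental over $\mathbb{K}$, the evaluation homomorphism $y\mapsto h$ is injective on $\mathbb{K}(y)$, so $\psi\circ\phi$ equals $y$ in $\mathbb{K}(y)$. Theorem~\ref{muldeg} applied inside $\mathbb{K}(y)$ then forces $\mathrm{deg}\ \phi\cdot\mathrm{deg}\ \psi=1$, so both $\phi$ and $\psi$ have degree $1$, and the relation $h=\psi(h')$ exhibits $(g,h)\sim(g',h')$. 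This use of degree multiplicativity in $\mathbb{K}(y)$ — essentially the uniqueness part of univariate L\"uroth — is the heart of the argument; the remaining bookkeeping is routine.
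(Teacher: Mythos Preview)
Your proof is correct and follows essentially the same route as the paper's: the same map $[(g,h)]\mapsto\mathbb{K}(h)$, well--definedness via composition units, surjectivity via the Extended L\"uroth Theorem, and injectivity via degree multiplicativity (Theorem~\ref{muldeg}). The only cosmetic difference is that you deduce $\psi\circ\phi=y$ from the transcendence of $h$ before invoking Theorem~\ref{muldeg}, whereas the paper first applies Theorem~\ref{muldeg} to $h=l(l'(h))$ to obtain $\deg l=\deg l'=1$ and then cites the uniqueness in Corollary~\ref{compute-g}; both orderings yield the same conclusion.
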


\begin{proof} The bijection is
\[
\begin{array}{ccc}
\{[(g,h)], \;f=g(h) \} & \longrightarrow & \{\mathbb{K}(f)\subset
\mathbb{F},{\rm }(\mathbb{F} / \mathbb{K})=1 \}. \\
\left[(g,h)\right]&\longmapsto & \mathbb{F}=\mathbb{K}(h)
\end{array}
\]
Suppose we have a uni--multivariate decomposition $(g,h) $ of $f
$. Since  $f=g(h) $, $\mathbb{F}=\mathbb{K}(h) $  is an
intermediate field of $\mathbb{K}(f)\subset \mathbb{K}(\mathbf{x})
$ with transcendence  degree $1 $ over
 $\mathbb{K} $. Also, if $(g',h ') $  is equivalent to
$(g,h) $,
 $h=l(h') $ for some composition unit $l\in \mathbb{K}(y)$. Consequently, $h'=l^{-1}(h) $ and $\mathbb{K}(h)
= \mathbb{K}(h') $.  Let $(g,h) $ and $(g',h ') $ be two
uni--multivariate decompositions   of $f $ such that
$\mathbb{K}(h)=\mathbb{K}(h ') $. Then there exists rational
functions $l,l ' \in \mathbb{K}(y) $ such that $h=l(h') $ and
$h'=l'(h) $.
 By   theorem~\ref{muldeg}, $\mathrm{deg}\  l(l')=1 $ and
$\mathrm{deg}\ l =\mathrm{deg}\ l'=1 $. By the uniqueness (see
Corollary \ref{compute-g}) of the left component, $y=l(l') $. So,
$l\in \mathbb{K}(y) $  is a composition unit and $(g,h) $, $(g',h
') $ are equivalent.
 By Theorem~\ref{extended},  there exist $h\in \mathbb{K}(\mathbf{x}) $ and
$g\in \mathbb{K}(y)$ such that $\mathbb{F}=\mathbb{K}(h) $ and
$f=g(h) $.
\end{proof}

\subsection{First algorithm}
We now proceed with the first algorithm for computing candidates
$h=h_n/h_d$. This algorithm is based on Theorem \ref{schicho}.
Since $
h_n(\mathbf{x})\:h_d(\mathbf{y})-h_d(\mathbf{x})\:h_n(\mathbf{y})$
divides $
f_n(\mathbf{x})\:f_d(\mathbf{y})-f_d(\mathbf{x})\:f_n(\mathbf{y})$,
one can compute candidates for $h$ from $f$ merely looking at the
near-separated divisors
$H=r(\mathbf{x})\:s(\mathbf{y})-r(\mathbf{y})\:s(\mathbf{x}) $.
Next, the problem is: given a multivariate polynomial
$H=(\mathbf{x}, \mathbf{y}) $, how can one determine if it is a
symmetric near-separated polynomial? This is a consequence of
theorem \ref{prop-casisep}:

\begin{corollary}
Given a polynomial $p\in\mathbb{K}[\mathbf{x}\,,\,\mathbf{y}]$, it
is possible to find a near--separated representation $(r,s)\in
\mathbb{K}[\mathbf{x}]^2$ of $p$, if it exists, by solving a
linear system of equations with coefficients in $\mathbb{K}$.
Moreover, any other solution $(r',s')$ of this linear system of
equations gives an equivalent decomposition.
\end{corollary}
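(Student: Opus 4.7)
My plan combines parts (ii) and (iii) of Theorem~\ref{prop-casisep} to reduce the search for a symmetric near--separated representation of $p$ to a linear algebra problem. The key insight is that, after normalising at a generic point, the component $r$ is pinned down outright and the defining identity becomes linear in the coefficients of $s$.

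First I would choose a point $\boldsymbol{\alpha}\in\mathbb{K}^n$ for which $p(\mathbf{x},\boldsymbol{\alpha})\not\equiv 0$ in $\mathbb{K}[\mathbf{x}]$ (passing to a sufficiently large finite extension of $\mathbb{K}$ if necessary). By Theorem~\ref{prop-casisep}(iii), if $p$ admits any symmetric near--separated representation, then it admits one with $r(\boldsymbol{\alpha})=0$ and $s(\boldsymbol{\alpha})=1$. Specialising $\mathbf{y}=\boldsymbol{\alpha}$ in $p(\mathbf{x},\mathbf{y})=r(\mathbf{x})s(\mathbf{y})-s(\mathbf{x})r(\mathbf{y})$ yields $r(\mathbf{x})=p(\mathbf{x},\boldsymbol{\alpha})$, so $r$ is determined explicitly. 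Next I would introduce undetermined coefficients for $s=\sum_I b_I\mathbf{x}^I$ with $|I|$ bounded by $\deg_{\mathbf{x}} p$, the bound being supplied by part~(ii). Expanding the defining identity together with the normalisation $s(\boldsymbol{\alpha})=1$ and matching coefficients of every monomial $\mathbf{x}^I\mathbf{y}^J$ produces a linear system in the $b_I$ with coefficients in $\mathbb{K}$; this system is consistent exactly when $p$ has a symmetric near--separated representation, and any solution $s$ yields one.

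For the uniqueness clause, if $(r,s)$ and $(r,s')$ are two solutions of the system (necessarily sharing the same $r$, since $r$ was forced by the specialisation step), then $t=s-s'$ satisfies $r(\mathbf{x})t(\mathbf{y})=r(\mathbf{y})t(\mathbf{x})$, which forces the ratio $t/r$ to be independent of the variables, i.e.\ $t=cr$ for some $c\in\mathbb{K}$. Hence $s'=s-cr$, and the associated rational functions satisfy $h'=r/s'=r/(s-cr)=h/(1-c\,h)$, with $h=r/s$. Since $l(y)=y/(1-cy)$ is a composition unit in $\mathbb{K}(y)$, the two representations correspond to equivalent uni--multivariate decompositions in the sense of Corollary~\ref{cuer-umfr}. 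More general freedom in the unnormalised problem corresponds to the natural $PGL_2$ action $(r,s)\mapsto(ar+bs,cr+ds)$ with $ad-bc=1$, which induces a M\"obius transformation on $h$ and hence again equivalence.

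The main obstacle I anticipate is the degenerate case where $\mathbb{K}$ is too small to contain a point with $p(\mathbf{x},\boldsymbol{\alpha})\not\equiv 0$; one then sets up the linear system over a finite extension of $\mathbb{K}$ and recovers a $\mathbb{K}$--rational solution by standard descent, since the coefficient matrix and right--hand side descend to $\mathbb{K}$ once $\boldsymbol{\alpha}$ is symbolically eliminated. Over the infinite ground fields that arise in our intended applications this difficulty does not appear, and every remaining step is mechanical, bounded in size by the number of monomials of total degree at most $\deg p$ in $n$ variables.
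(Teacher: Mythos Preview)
Your argument is correct and is essentially the approach the paper has in mind. The paper does not give a written proof of this corollary; it merely points to Theorem~\ref{prop-casisep} and then illustrates the method in Example~\ref{alg1-ex2}: specialise $\mathbf{y}=\boldsymbol{\alpha}$ to read off $r$ via part~(iii), then determine $s$ by matching coefficients (the example exploits part~(iv) to peel off $s$ variable by variable, whereas you use part~(ii) only to bound $\deg s$ and then set up a single global linear system---but this is a cosmetic difference). Your treatment of the equivalence clause, reducing two solutions with the same normalised $r$ to $s'=s-cr$ and hence $h'=l(h)$ with $l(y)=y/(1-cy)$, is exactly the mechanism the paper is implicitly relying on and is carried out cleanly.
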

\begin{algorithm}\label{alg1}
\
\begin{description}
\item [Input:] $f\in \mathbb{K}(\mathbf{x})$.
\item [Output:] A uni--multivariate decomposition $(g,h)$ of $f$, if it
exists.
\end{description}
\begin{description}
\item[A] Factor the symmetric polynomial
\[p=f_n(\mathbf{x})\:f_d(\mathbf{y})-f_d(\mathbf{x})\:f_n(\mathbf{y}).\]
Let $D=\{H_1,\ldots,H_m\}$ the set of factors of $p$ (up to
product by constants). Let $i=1$.
\item[B] Check if $H_i$ is a symmetric near--separated polynomial.
If $H=r(\mathbf{x})\:s(\mathbf{y})-r(\mathbf{y})\:s(\mathbf{x})$,
then $h=\displaystyle{\frac{r}{s}}$ is a right--component for $f$;
compute the left component $g$ by solving a linear system (see
Corollary \ref{compute-g}) and \textbf{RETURN} $(g,h)$.
\item[C] If $i<m$, then increase $i$ and go to \textbf{B}.
Otherwise, \textbf{RETURN NULL} ($f$ is uni--multivariate
indecomposable).
\end{description}
\end{algorithm}
\begin{example}\label{alg1-ex1}
Let
\[
\begin{array}{rcl}
f&=&4z^4y^2-8z^3y^3+8z^2yx+4z^2y^4-8zy^2x\\
&&+4x^2-2z^2y+2zy^2-2x+10.
\end{array}
\]
The factorization of the polynomial $f(x,y,z)-f(s,t,u)$ is \[
\begin{array}{l}2\,
\left (2x-1+2s-2ut^2+2u^2t-2zy^2+2z^2y \right)\\ \left
(x-s+z^2y-zy^2-u^2t+ut^2\right).
\end{array}
\]
The first factor $f_1=2x-1+2s-2ut^2+2u^2t-2zy^2+2z^2y$ is not
symmetric near--separated because $f_1(x,y,z,x,y,z)\neq 0$. On the
other hand, the second factor $f_2=x-s+z^2y-zy^2-u^2t+ut^2$ does
satisfy $f_2(x,y,z,x,y,z)=0$. Note that by a previous remark, the
components of the decomposition can be considered as polynomials.
Then $f_2$ can be written as $f_2=h(x,y,z)-h(s,t,u)$. Taking
$h(x,y,z)=f_2(x,y,z,0,0,0)=x+z^2y-zy^2$, we check that it
satisfies the previous equation (see theorem~\ref{prop-casisep}).
The left component $g$ is also a polynomial, and by
theorem~\ref{muldeg}, has degree $2$. Solving the equation
$f=g(h)$ we have the multi--univariate decomposition:
\[(4t^2-2t+10,x+z^2y-zy^2).\]
\end{example}
\begin{example}\label{alg1-ex2}
Let $f=\displaystyle{\frac{f_n}{f_d}}$ with
\[
\begin{array}{rcl}
f_n&=&y^2x^2+2x^2yz^2-2y^6x+z^4x^2-2z^2xy^5+y^{10}\\
&&-81x^2-450xyz-625y^2z^2,\\
f_d&=&y^2x^2+2x^2yz^2-2y^6x+z^4x^2-2z^2xy^5+y^{10}\\
&&-162x^2-900xyz-1250y^2z^2.
\end{array}
\]
We look for all the intermediate fields of $\mathbb{Q}(f)\subset
\mathbb{Q}(x,y,z)$ with transcendence degree $1 $ over $\mathbb{Q}
$. First, we will factor the polynomial
\[f_n(x,y,z)\,f_d(s,t,u)-f_n(s,t,u)\,f_d(x,y,z)=-625f_1\,f_2, \]
where
\[
\begin{array}{r@{}c@{}l}
f_1&=&-xtz^2u+\frac{9}{25}xt^5-zsty-zu^2sy+zt^5y-\frac{9}{25}xz^2s\\
&&-\frac{9}{25}xu^2s-\frac{9}{25}xys-xyut-\frac{9}{25}xts+\frac{9}{25}sy^5+uty^5,\\
\\
f_2&=&-xtz^2u-\frac{9}{25}xt^5+zsty+zu^2sy-zt^5y-\frac{9}{25}xz^2s\\
&&+\frac{9}{25}xu^2s-\frac{9}{25}xys-xyut+\frac{9}{25}xts+\frac{9}{25}sy^5+uty^5.
\end{array}
\]
We have  $f_1(x,y,z,x,y,z)\not=0$, thus $f_1$ is not symmetric
near--separated. On the other hand, $f_2(x,y,z,x,y,z)=0$.
Moreover,
\[
\begin{array}{rcl}
f_2&=&-zt^5y+uty^5+\left(-\displaystyle{\frac{9}{25}}t^5-tz^2u-yut\right)x\\
&&+\left(zty+\displaystyle{\frac{9}{25}}y^5+zu^2y\right)s\\
&&+\left(-\displaystyle{\frac{9}{25}}z^2+\displaystyle{\frac{9}{25}}t+
\displaystyle{\frac{9}{25}}U^2-\displaystyle{\frac{9}{25}}y\right)sx.
\end{array}
\]
We check that $f_2$ is symmetric near--separated, by solving a
linear system of equations. Define
\[
\begin{array}{l@{\;}c@{\;}l@{\;}c@{\:}l}
f_2(x,y,z,1,0,0)&=&r(x,y,z)&=&-\displaystyle{\frac{9}{25}}xz^2-
\displaystyle{\frac{9}{25}}xy+\displaystyle{\frac{9}{25}}y^5\\
&&&=&\left(-\displaystyle{\frac{9}{25}}z^2-\displaystyle{\frac{9}{25}}y\right)x
+\displaystyle{\frac{9}{25}}y^5
\end{array}
\]
Next, we compute $s_0(y,z)$ such that
\[\displaystyle{\frac{9}{25}}y^5s_0(t,u)-\displaystyle{\frac{9}{25}}t^5s_0(y,z)
=-zt^5y+ut{y}^{5}.\] Let $s_0(y,z)=a_5(z)\,y^5+\cdots+a_0(z)$.
Then $a_1=\displaystyle{\frac{25}{9}}\,z$ and
$a_0=a_2=a_3=a_4=a_5=0$. Hence,
$s_0=\displaystyle{\frac{25}{9}}zy$ and
$s_1(y,z)=\displaystyle{\frac{r_1(y,z)\,s_0(t,u)-c_{10}}{r_0(t,u)}}=1$.
Thus $s=x+\displaystyle{\frac{25}{9}}zy$,
$s(1,0,0)=1$ and $(r,s)$
is a symmetric near--separated representation of $p$:
\[
\begin{array}{rcl}
r&=&-\displaystyle{\frac{9}{25}}xz^2-\displaystyle{\frac{9}{25}}xy+\displaystyle{\frac{9}{25}}y^5\\
s&=&x+\displaystyle{\frac{25}{9}}zy.\end{array}
\]
Now we compute $g$, which is a univariate function with degree
$2$. Solving the corresponding linear system of equations we
obtain
\[
g=\displaystyle{\frac{625t^2-6561}{625t^2-13122}}.
 \]
\end{example}

\subsection{Second algorithm}
For this algorithm, we suppose that $\mathbb{K}$ has sufficiently
many elements. If it is not the case, then we can decompose $f$ in
an algebraic extension $\mathbb{K}[\omega]$ of $\mathbb{K}$, and
then check if it is equivalent to a decomposition with
coefficients in $\mathbb{K}$; this last step can be done by
solving a system of linear equations in the same fashion as the
computation of $g$. The algorithm is based on Corollary
\ref{primeform}; we will need several technical results too.

\begin{lemma}\label{multivunits}
Let $f\in\mathbb{K}(\mathbf{x})$. Then for any admissible monomial
ordering $>$ there are units $u\in\mathbb{K}(y),\
v_i\in\mathbb{K}(x_i),\:i=1,\ldots,n$ such that, if
$\overline{f}=\overline{f}_n\,/\,\overline{f}_d=u\circ
f(v_1,\ldots,v_n)$, then $\mathrm{lm}\ \overline{f}_n>\mathrm{lm}\
\overline{f}_d$, $\overline{f}_n(0,\ldots,0)=0$ and
$\overline{f}_d(0,\ldots,0)\neq 0$.
\end{lemma}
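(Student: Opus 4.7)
The plan is to combine coordinate shifts with a single post-composition by a Möbius unit. Concretely, I would set $v_i(x_i) = x_i + \alpha_i$ with carefully chosen $\alpha_i \in \mathbb{K}$, let $g = g_n/g_d = f(\mathbf{x} + \boldsymbol{\alpha})$ in reduced form, and then choose $u \in \mathbb{K}(y)$ of degree $1$ by cases on the comparison of $\mathrm{lm}\ g_n$ with $\mathrm{lm}\ g_d$.

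The key preliminary observation is that a translation preserves leading monomials and leading coefficients with respect to any admissible ordering. Indeed, expanding $p(\mathbf{x} + \boldsymbol{\alpha})$ replaces each $\mathbf{x}^\beta$ appearing in $p$ by a sum of $\mathbf{x}^\gamma$ with $\gamma \leq \beta$ componentwise, and admissibility forces $\mathbf{x}^\gamma \leq \mathbf{x}^\beta$ in the order; hence no monomial beyond $\mathrm{lm}\ p$ is ever produced and its coefficient survives unchanged. Writing $a_n = \mathrm{lc}\ f_n$, $a_d = \mathrm{lc}\ f_d$, $n_0 = f_n(\boldsymbol{\alpha})$, $d_0 = f_d(\boldsymbol{\alpha})$, one therefore has $\mathrm{lc}\ g_n = a_n$, $\mathrm{lc}\ g_d = a_d$, $g_n(0,\ldots,0) = n_0$, and $g_d(0,\ldots,0) = d_0$.

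Using the hypothesis that $\mathbb{K}$ has sufficiently many elements, I would pick $\boldsymbol{\alpha}$ outside the zero set of the polynomial $f_n \cdot f_d \cdot (a_n f_d - a_d f_n)$. The last factor is not identically zero, for otherwise $f = a_n/a_d \in \mathbb{K}$ would be constant, contradicting the non-constancy of $f$. This simultaneously secures $n_0 \neq 0$, $d_0 \neq 0$, and $a_n d_0 - a_d n_0 \neq 0$.

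For the Möbius component, take $u(y) = d_0 y - n_0$ when $\mathrm{lm}\ g_n > \mathrm{lm}\ g_d$ (so that $\overline{f}_n = d_0 g_n - n_0 g_d$ and $\overline{f}_d = g_d$); take $u(y) = (d_0 y - n_0)/y$ when $\mathrm{lm}\ g_n < \mathrm{lm}\ g_d$ (so that $\overline{f}_d = g_n$); and take $u(y) = (d_0 y - n_0)/(a_d y - a_n)$ when $\mathrm{lm}\ g_n = \mathrm{lm}\ g_d$ (so that $\overline{f}_d = a_d g_n - a_n g_d$). In each case $\overline{f}_n(0,\ldots,0) = 0$, $\overline{f}_d(0,\ldots,0) \neq 0$, and $\mathrm{lm}\ \overline{f}_n > \mathrm{lm}\ \overline{f}_d$ by direct inspection, and $u$ is visibly of degree one. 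The main obstacle is the equal-leading-monomials case: there one must arrange the leading terms of $\overline{f}_d = a_d g_n - a_n g_d$ to cancel while those of $\overline{f}_n = d_0 g_n - n_0 g_d$ do not, and this is exactly what the third genericity condition $a_n d_0 - a_d n_0 \neq 0$ on $\boldsymbol{\alpha}$ provides.
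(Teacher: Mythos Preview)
Your argument is correct and follows essentially the same strategy as the paper: combine translations $v_i=x_i+\alpha_i$ with a degree-one $u$, splitting into cases on the comparison of $\mathrm{lm}\,g_n$ and $\mathrm{lm}\,g_d$. The only organizational difference is that the paper applies a first unit $u_1$ to force $\mathrm{lm}\,f_{1n}>\mathrm{lm}\,f_{1d}$, \emph{then} translates so that $f_{1d}(\boldsymbol{\alpha})\neq 0$, and finally subtracts the constant $f_{2n}(0)/f_{2d}(0)$; you instead translate first and fold everything into a single M\"obius map at the end, which costs you the extra genericity constraints $n_0\neq 0$ and $a_nd_0-a_dn_0\neq 0$ on $\boldsymbol{\alpha}$. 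Your explicit remark that translation preserves leading monomials under any admissible ordering is a point the paper uses tacitly.
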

\begin{proof}
Let $>$ be any admissible monomial ordering. Let $u_1\in
\mathbb{K}(y)$ be a unit such that $f_1=f_{1n}\,/\,f_{1d}=u_1(f)$
verifies $\mathrm{lm}\ f_{1n}>\mathrm{lm}\ f_{1d}$. Such a unit
always exists:
\begin{description}
\item [--] If $\mathrm{lm}\ f_n<\mathrm{lm}\ f_d$, let $u_1=1/y$.
\item [--] If $\mathrm{lm}\ f_n=\mathrm{lm}\ f_d$, let
$u_1=(1/y)\circ(y-\displaystyle{\frac{\mathrm{lc}\
f_n}{\mathrm{lc}\ f_d}})$.
\item [--] If $\mathrm{lm}\ f_n>\mathrm{lm}\ f_d$, let $u_1=y$.
\end{description}
Let $\underline{\alpha}=(\alpha_1,\ldots,\alpha_n)\in
\mathbb{K}^n$ such that $f_{1d}(\underline{\alpha})\neq 0$ (such a
$\alpha$ exists if $\mathbb{K}$ has sufficiently many elements).
Let $v_i=x_i+\alpha_i,\ i=1,\ldots,n$ and
$f_2=f_{2n}/f_{2d}=f_1(v_1,\ldots,v_n)$. Then
$f_{2d}(0,\ldots,0)\neq 0$, and we can take
\[u=y-\frac{f_{2n}(0,\ldots,0)}{f_{2d}(0,\ldots,0)}\] so that
$\overline{f}=u\circ f(v_1,\ldots,v_n)$ verifies all the
conditions.
\end{proof}
\begin{lemma}\label{homog}
Let $i,j,k\in\mathbb{N}$ with $i<j\leq k$,
$P,Q\in\mathbb{K}[\underline{x}]$ and $>$ an admissible monomial
ordering such that $\mathrm{lm}\ P>\mathrm{lm}\ Q$. Then
$\mathrm{lm}\ P^jQ^{k-j}>\mathrm{lm}\ P^iQ^{k-i}$.
\end{lemma}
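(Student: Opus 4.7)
The plan is to reduce the claim on $\mathrm{lm}\,P^jQ^{k-j}$ and $\mathrm{lm}\,P^iQ^{k-i}$ to a pure statement about monomials, and then exploit the two defining properties of an admissible monomial ordering: that it is compatible with multiplication, and that $\mathrm{lm}(AB)=\mathrm{lm}\,A\cdot\mathrm{lm}\,B$ for any nonzero polynomials $A,B$ (which follows because the leading term of a product is the product of the leading terms, since no cancellation can occur in the largest monomial).

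First I would apply the multiplicativity of $\mathrm{lm}$ iteratively to obtain
\[
\mathrm{lm}\,P^jQ^{k-j}=(\mathrm{lm}\,P)^{j}(\mathrm{lm}\,Q)^{k-j},\qquad
\mathrm{lm}\,P^iQ^{k-i}=(\mathrm{lm}\,P)^{i}(\mathrm{lm}\,Q)^{k-i}.
\]
Note that the hypothesis $i<j\le k$ guarantees all four exponents $i,j,k-i,k-j$ are non-negative, so these are genuine monomials and the computation makes sense.

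Next I would prove, by induction on the positive integer $r=j-i$, that $(\mathrm{lm}\,P)^{r}>(\mathrm{lm}\,Q)^{r}$. The base case $r=1$ is the hypothesis. For the inductive step, multiplying the relation $\mathrm{lm}\,P>\mathrm{lm}\,Q$ by $(\mathrm{lm}\,P)^{r}$ gives $(\mathrm{lm}\,P)^{r+1}>(\mathrm{lm}\,P)^{r}\mathrm{lm}\,Q$, and multiplying the inductive hypothesis by $\mathrm{lm}\,Q$ gives $(\mathrm{lm}\,P)^{r}\mathrm{lm}\,Q>(\mathrm{lm}\,Q)^{r+1}$; transitivity closes the induction. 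Finally I would multiply the resulting inequality $(\mathrm{lm}\,P)^{j-i}>(\mathrm{lm}\,Q)^{j-i}$ by the common monomial factor $(\mathrm{lm}\,P)^{i}(\mathrm{lm}\,Q)^{k-j}$, once again invoking compatibility with multiplication, to obtain exactly
\[
(\mathrm{lm}\,P)^{j}(\mathrm{lm}\,Q)^{k-j}>(\mathrm{lm}\,P)^{i}(\mathrm{lm}\,Q)^{k-i},
\]
which, combined with the first step, is the claim.

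There is no real obstacle here: the argument is a routine bookkeeping exercise in the axioms of an admissible term order. The only point that deserves care is to make sure the exponent $k-j$ is used correctly (non-negative, since $j\le k$) so that we never have to invert a monomial; this is why the hypothesis is stated with $j\le k$ rather than merely $j,k\in\mathbb{N}$.
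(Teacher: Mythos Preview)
Your argument is correct: reducing to leading monomials via $\mathrm{lm}(AB)=\mathrm{lm}\,A\cdot\mathrm{lm}\,B$, establishing $(\mathrm{lm}\,P)^{j-i}>(\mathrm{lm}\,Q)^{j-i}$ by induction on $j-i$, and then multiplying through by the common monomial $(\mathrm{lm}\,P)^{i}(\mathrm{lm}\,Q)^{k-j}$ is exactly the routine verification one expects, and you have handled the bookkeeping (non-negativity of $k-j$, compatibility of $>$ with multiplication) carefully.

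There is nothing to compare against: the paper states this lemma without proof, evidently regarding it as an immediate consequence of the axioms for an admissible ordering. Your write-up simply makes explicit what the authors left to the reader.
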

\begin{lemma}\label{equivdec}
Let
$\overline{f}=\overline{f}_n/\overline{f}_d\in\mathbb{K}(\mathbf{x})$
such that $\mathrm{lm}\ \overline{f}_n>\mathrm{lm}\
\overline{f}_d$, $\overline{f}_n(0,\ldots,0)=0$ and
$\overline{f}_d(0,\ldots,0)\neq 0$. Then, for every
uni--multivariate decomposition $\overline{f}=g(h)$ there exists
an equivalent decomposition
$\overline{f}=\overline{g}(\overline{h})$ with
$\overline{g}=\overline{g}_n/\overline{g}_d,\mathrm{deg}\
\overline{g}_n>\mathrm{deg}\ \overline{g}_d$ and
$\overline{g}_n(0)=0$ (thus $\overline{g}_d(0)\neq 0$).
\end{lemma}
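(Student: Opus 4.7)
The plan is to choose a composition unit $l\in\mathbb{K}(y)$ and set $\overline{h}=l(h)$, $\overline{g}=g\circ l^{-1}$. The two required conditions ($\deg\overline{g}_n>\deg\overline{g}_d$ and $\overline{g}_n(0)=0$, the latter forcing $\overline{g}_d(0)\neq 0$ by coprimality) say exactly that, as a self-map of $\mathbb{P}^1$, $\overline{g}$ has a regular zero at $0$ and a pole at $\infty$. Since $\overline{g}=g\circ l^{-1}$, this is equivalent to choosing $l^{-1}(0)$ to be a $\mathbb{K}$-rational zero of $g$ and $l^{-1}(\infty)$ a $\mathbb{K}$-rational pole of $g$; those two specifications determine $l$ uniquely up to a nonzero scalar. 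The proof thus reduces to extracting such a $\mathbb{K}$-rational zero $\zeta$ and pole $\pi$ of $g$ from the hypotheses on $\overline{f}$.

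For $\zeta$, I would use $\overline{f}_n(\mathbf{0})=0$ and $\overline{f}_d(\mathbf{0})\neq 0$. Substituting the explicit expressions for $\overline{f}_n,\overline{f}_d$ from Corollary~\ref{primeform} and evaluating at $\mathbf{0}$ rules out the indeterminate case $h_n(\mathbf{0})=h_d(\mathbf{0})=0$, because in that situation every term of $\overline{f}_d$ has a factor $h_n$ or $h_d$, making $\overline{f}_d(\mathbf{0})=0$. Hence $\beta:=h(\mathbf{0})\in\mathbb{P}^1(\mathbb{K})$ is well-defined ($\beta\in\mathbb{K}$ if $h_d(\mathbf{0})\neq 0$, $\beta=\infty$ if $h_d(\mathbf{0})=0$ and $h_n(\mathbf{0})\neq 0$), and $g(\beta)=\overline{f}(\mathbf{0})=0$ makes $\beta$ a $\mathbb{K}$-rational zero of $g$; set $\zeta=\beta$.

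For $\pi$, I would use $\mathrm{lm}\ \overline{f}_n>\mathrm{lm}\ \overline{f}_d$. The leading monomials and leading coefficients of $h_n,h_d$ in the ordering $>$ determine a ``leading value'' $\mu\in\mathbb{P}^1(\mathbb{K})$ of $h$: $\mu=\infty$ when $\mathrm{lm}\ h_n>\mathrm{lm}\ h_d$, $\mu=0$ when $\mathrm{lm}\ h_n<\mathrm{lm}\ h_d$, and $\mu=\mathrm{lc}\ h_n/\mathrm{lc}\ h_d$ when the leading monomials coincide. Applying $\mathrm{lm}$ to the Corollary~\ref{primeform} formulas and working through these three cases shows that $\mathrm{lm}\ \overline{f}_n>\mathrm{lm}\ \overline{f}_d$ holds precisely when $\mu$ is a pole of $g$; set $\pi=\mu$. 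Because $\gcd(g_n,g_d)=1$, the zero and pole sets of $g$ in $\mathbb{P}^1$ are disjoint, so $\zeta\neq\pi$, and there is a M\"obius transformation $l\in\mathbb{K}(y)$ (unique up to scalar) with $l(\zeta)=0$ and $l(\pi)=\infty$. Then $\overline{g}=g\circ l^{-1}$ satisfies $\overline{g}(0)=g(\zeta)=0$ regularly and $\overline{g}(\infty)=g(\pi)=\infty$, which is exactly the conclusion, while $\overline{h}=l(h)$ is the corresponding equivalent right component. The main obstacle is the pole extraction: formalising the leading value $\mu$ for the ordering $>$ and executing the case-by-case leading-monomial bookkeeping in Corollary~\ref{primeform} to verify that the hypothesis is equivalent to ``$\mu$ is a pole of $g$''.
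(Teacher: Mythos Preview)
Your argument is correct and rests on the same ingredients as the paper's proof, but is organised more cleanly. The paper proceeds in two stages: it first composes $h$ with a unit to force $h_{1n}(\mathbf{0})=0$, deduces from Corollary~\ref{primeform} that $a_0=0$ (your zero $\zeta$, already moved to $0$), and only then performs the three-way case split on $\mathrm{lm}\,h_{1n}$ versus $\mathrm{lm}\,h_{1d}$ to obtain $\deg g_n>\deg g_d$, applying in the equal-$\mathrm{lm}$ case a \emph{second} unit $u_2$ and appealing to ``the previous case''. Your single-pass version---read off the zero $\zeta=h(\mathbf{0})$ and the pole $\pi=\mu$ of $g$ directly in $\mathbb{P}^1(\mathbb{K})$, then send both to $0$ and $\infty$ with one M\"obius transformation---is tidier and sidesteps a mild awkwardness in the paper's argument, namely verifying that the second unit $u_2$ does not disturb the condition $g_n(0)=0$ already secured. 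The ``main obstacle'' you flag, the leading-monomial bookkeeping in the three cases for $\mu$, is precisely the content of the paper's case split (via Lemma~\ref{homog}), and goes through exactly as you outline: in the equal-$\mathrm{lm}$ case the top coefficients of $\overline f_n,\overline f_d$ are $h_d^{\max\{u,v\}}$-multiples of $g_n(\mu),g_d(\mu)$, so the hypothesis forces $g_d(\mu)=0\neq g_n(\mu)$.
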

\begin{proof}
As in the proof of Lemma \ref{multivunits}, there exists a unit
$u_1$ such that if $h_1=u(h)=h_{1n}/h_{1d}$, then
$h_{1n}(0,\ldots,0)=0$. Let
$g_1=g(u^{-1})=(a_uy^u+\cdots+a_0)/(b_vy^v+\cdots+b_0)$. Then
\[\overline{f}=\frac{a_uh_{1n}^u+\cdots+a_0h_{1d}^u}{b_vh_{1n}^v
+\cdots+b_0h_{1d}^v}h_{1d}^{v-u}\] and by Corollary
\ref{primeform},
$\overline{f}_d=(b_vh_{1n}^v+\cdots+b_0h_{1d}^v)h_{1d}^{max\{u-v,0\}}$.
As $\overline{f}_d(0,\ldots,0)\neq 0$ and $h_{1n}(0,\ldots,0)=0$
we must have $h_{1d}(0,\ldots,0)\neq 0$. But
$\overline{f}_n(0,\ldots,0)=0$ and
$\overline{f}_n=(a_uh_{1n}^u+\cdots+a_0h_{1d}^u)h_{1d}^{max\{v-u,0\}}$,
thus $a_0=0$. Next, we will prove that there is an equivalent
decomposition verifying the condition on the degrees of the
left--component. To that end, we will consider three cases. Let
$>$ be any admissible monomial ordering and $w=\mathrm{deg}\
g_1=\mathrm{max}\{u,v\}$.
\begin{description}
\item [--] If $\mathrm{lm}\ h_{1n}<\mathrm{lm}\ h_{1d}$ then using
repeatedly Lemma \ref{homog}, \[\mathrm{lm}\
\overline{f}_n=\mathrm{lm}\ h_{1n}h_{1d}^{w-1}<\mathrm{lm}\
h_{1d}^w=\mathrm{lm}\ \overline{f}_d\] which contradicts our
hypothesis.
\item [--] If $\mathrm{lm}\ h_{1n}>\mathrm{lm}\ h_{1d}$, then
applying Lemma \ref{homog}, \[\begin{array}{rcl}\mathrm{lm}\
\overline{f}_n&=&\mathrm{lm}\ h_{1n}^uh_{1d}^{max\{v-u,0\}}\\
\mathrm{lm}\ \overline{f}_d&=&\mathrm{lm}
h_{1n}^vh_{1d}^{max\{u-v,0\}}.\end{array}\] As $\mathrm{lm}\
\overline{f}_n>\mathrm{lm}\ \overline{f}_d$ by hypothesis, by
Lemma \ref{homog} again we must have $u>v$, that is,
$\mathrm{deg}\ g_{1n}>\mathrm{deg}\ g_{1d}$.
\item [--] If $\mathrm{lm}\ h_{1n}=\mathrm{lm}\ h_{1d}$ then, as in
Lemma \ref{multivunits}, we can cancel the leading monomial of
$h_{1d}$ with a unit $u_2$ on the left, so that
$\overline{f}=g_2(h_2)$ with $\mathrm{lm}\ h_{2n}>\mathrm{lm}\
h_{2d}$ which is the previous case.
\end{description}
\end{proof}
Let $f=g(h)$ be a uni--multivariate decomposition of $f$ with
$f=f_n/f_d,g=(a_uy^u+\cdots+a_0)/(b_vy^v+\cdots+b_0)$ and
$h=h_n/h_d$. By the previous lemma, we can suppose $u>v$ and
$g(0)=0$, i.e. $a_0=0$. Then, as
\[f=\frac{a_uh_n^u+\cdots+a_1h_nh_d^{u-1}}{(b_vh_n^v+\cdots+b_0h_d^v)h_d^{u-v}}\]
we have that $h_n\,|\,f_n$ and $h_d\,|\,f_d$. This is the key to
the following algorithm.
\begin{algorithm}\label{alg2}
\
\begin{description}
\item [Input:] $f\in\mathbb{K}(\mathbf{x})$.
\item [Output:] $(g,h)$ a uni--multivariate decomposition of $f$, if it
exists.
\end{description}
\begin{description}
\item[A] Compute $u,v_1,\ldots,v_n$ as in Lemma \ref{multivunits}. Let
\[\overline{f}=\overline{f}_n/\overline{f}_d=u\circ
f(v_1,\ldots,v_n)\]
\item[B] Factor $\overline{f}_n$ and $\overline{f}_d$. Let
$D=\{(A_1,B_1),\ldots,(A_m,B_m)\}$ be the set of pairs $(A,B)$
such that $A,B$ divide $\overline{f}_n,\overline{f}_d$
respectively (up to product by constants). Let $i=1$.
\item[C] Check if there exists $g\in\mathbb{K}(y)$ with
$\overline{f}=g(A_i/B_i)$; if such a $g$ exists, \textbf{RETURN}
$\left(u^{-1}(g),h(v_1^{-1},\ldots,v_n^{-1})\right)$.
\item[D] If $i<m$, increase $i$ and go to \textbf{C}, otherwise
\textbf{RETURN NULL} ($f$ is uni--multivariate indecomposable).
\end{description}
\end{algorithm}
\begin{example}\label{alg2-ex1}
Let
\[\begin{array}{rcl}
f&=&4z^4y^2-8z^3y^3+8z^2yx+4z^2y^4-8zy^2x\\
&&+4x^2-2z^2y+2zy^2-2x+10
\end{array}\]
as in Example \ref{alg1-ex1}.
We take $u=t-10\in \mathbb{K}(t)$ and $v_1=x,v_2=y,v_3=z$. Then
\[\begin{array}{rcl}
\overline{f}&=&4z^4y^2-8z^3y^3+8z^2yx+4z^2y^4-8zy^2x\\
&&+4x^2-2z^2y+2zy^2-2x.
\end{array}\]
We factor $\overline{f}=2(x+z^2y-zy^2)(2x-1+2z^2y-2zy^2)$. We
first take the  candidate $(x+z^2y-zy^2)$. We have to check if
there are values of $a_1,a_2$ for which $g=a_2t^2+a_1t$ verifies
$\overline{f}=g(x+z^2y-zy^2)$. We find the solution
$a_2=4,a_1=-2$. Thus $f=(4t^2-2t+10)(x+z^2y-zy^2)$.
\end{example}
\begin{example}\label{alg2-ex2}
Let $f=\displaystyle{\frac{f_n}{f_d}}$ with
\[\begin{array}{rcl}
f_n&=&y^2x^2+2x^2yz^2-2y^6x+z^4x^2-2z^2xy^5\\
&&+y^{10}-81x^2-450xyz-625y^2z^2,\\
f_d&=&y^2x^2+2x^2yz^2-2y^6x+z^4x^2-2z^2xy^5\\
&&+y^{10}-162x^2-900xyz-1250y^2z^2,
\end{array}\] as in Example \ref{alg1-ex2}. Let $>$ be the pure
lexicographical ordering with $y>x>z$. Then $\mathrm{lm}\
f_n=\mathrm{lm}\ f_d=y^{10}$. Following the proof of lemma
\ref{multivunits}, let $u_1=1/(t-1)$, then $u_1(f)=f_{1n}/f_{1d}$
with
\[\begin{array}{rcl}
f_{1n}&=&y^2x^2+2x^2yz^2-2y^6x+z^4x^{2}-2z^2xy^5\\
&&+y^{10}-162x^2-900xyz-1250y^2z^2,\\
f_{1d}&=&81x^2+450xyz+625y^2z^2.
\end{array}\]
Now, let $\alpha=(1,0,0)$, so that the denominator of the previous
expression is non--zero at the point $\alpha$. Then
$f_{2n}/f_{2d}=u(f(x+1,y,z))$ with
\[\begin{array}{rcl}
f_{2n}&=& y^2x^2+2y^2x+y^2+2x^2yz^2+4yz^2x+2yz^2\\
&&-2y^6x-2y^6+z^4x^2+2z^4x+z^4-2z^2xy^5\\
&&-2z^2y^5+y^{10}-162x^2-324x-162\\ &&-900xyz-900yz-1250y^2z^2,\\
f_{2d}&=&81x^2+162x+81+450xyz+450yz+625y^2z^2.
\end{array}\]
As $f_{2n}(0,0,0)=-162$ and $f_{2d}(0,0,0)=81$, if $u_2=t+2$, we
have that \[u_2( u_1(
f(x+1,y,z)))=\overline{f}=\frac{\overline{f}_n}{\overline{f}_d}
\]
verifies the conditions of Lemma \ref{equivdec}. We factor
$\overline{f}_n$ and $\overline{f}_d$:
\[\begin{array}{rl}
\overline{f}_n&=\left(z^2+z^2x+y+xy-y^5\right)^2,\\
\overline{f}_d&=\left(9x+9+25yz\right )^2.
\end{array}\]
As the degree is multiplicative and $ \mathrm{deg}\ f=10$, and
also $\mathrm{lm}\ \overline{h}_n>\mathrm{lm}\ \overline{h}_d$,
the possible values of $\overline{h}_n,\overline{h}_d$ are
\[\begin{array}{rcl}
\overline{h}_n&=&z^2+z^2x+y+xy-y^5,\\
\overline{h}_d&\in&\{1,9x+9+25yz,\left(9x+9+25yz\right)^{2.}\}
\end{array}\]
To check them, let
$\overline{g}=\frac{a_2t^2+a_1t}{b_1t+b_0}$. We substitute
$\overline{h}$ in $\overline{g}$ and solve the homogeneous linear
system obtained by comparing the coefficients with those of
$\overline{f}$.
\begin{description}
\item [--] If $\overline{h}_d=1$, there is only the trivial solution,
thus $\overline{h}$ is not a candidate for $\overline{f}$.
\item [--] If $\overline{h}_d=9x+9+25yz$, we get the non--trivial solution
\linebreak $a_2=b_0=1,\:a_1=b_1=0$, thus $f$ has a
uni--multivariate decomposition
\[\begin{array}{c}
(u_1^{-1}(u_2^{-1}(\overline{g}))\:,\:\overline{h}(x-1,y,z))=({\frac
{-1+t^2}{t^2-2}},{\frac{z^2x+yx-y^5}{9x+25yz}}).
\end{array}\]
\item [--] If $\overline{h}_d=\left(9x+9+25yz\right)^2$,
the only solution is the trivial one.
\end{description}
Therefore, any uni--multivariate decomposition of $f$ is
equivalent to the decomposition $(g,h)$ computed before.
\end{example}
\section{Performance}
Both algorithms run in exponential time, since the number of
candidates to be tested is, in the worst case, exponential in the
degree of the input; the rest of the steps in both algorithms are
in polynomial time. However, in practical examples it seems that
most of the time is spent in the factorization of the associated
symmetric near--separated polynomial, in Algorithm \ref{alg1}, or
the numerator and denominator in Algorithm \ref{alg2}. We show in
Table 1 the average times obtained by running implementations of
these algorithms in Maple VI (see \cite{GRub00}) on a collection
of random functions. The parameters are the number of variables
$n$ and the degree of the rational function $d$. We have also
included the factorization times for each algorithm.
\begin{table}
\centering
\caption{Average computing times (in seconds)}
\begin{tabular}{|c c|c c|c c|}
\hline n&d&Alg \ref{alg1}&Fact.&Alg \ref{alg2}&Fact.\\
 \hline 2&10&32.17&23.15&27.03&22.44\\
2&25&68.20&46.34&51.10&40.33\\ 2&30&89.40&62.48&91.22&71.06\\
4&8&54.37&38.56&32.07&25.47\\ 4&25&89.75&65.95&64.41&46.72\\
4&30&156.87&110.30&134.60&99.87\\
8&10&234.90&162.89&156.12&116.66\\
8&25&349.44&235.41&341.11&276.85\\
8&30&654.72&454.36&678.89&511.01\\ \hline
\end{tabular}
\end{table}
\bibliographystyle{acm}
\bibliography{01}

\begin{thebibliography}{10}

\bibitem{AGR95}
{\sc Alonso, C., Gutierrez, J., and Recio, T.}
\newblock A rational function decomposition algorithm by near-separated
  polynomials.
\newblock {\em J. Symb. Comp. 19\/} (1995), 527--544.

\bibitem{Gor1887}
{\sc Gordan, P.}
\newblock {\"U}ber biquadratische {G}leichungen.
\newblock {\em Math. Ann. 29\/} (1887), 318--326.

\bibitem{GRub00}
{\sc Gutierrez, J., and Rubio, R.}
\newblock Cadecom: {C}omputer {A}lgebra software for functional
  {DECOM}position.
\newblock In {\em Computer Algebra in Scientific Computing CASC'00\/}
  (Samarkand, 2000), Springer-Verlag, pp.~233--248.

\bibitem{Igu95}
{\sc Igusa, J.}
\newblock On a theorem of {L\"u}roth.
\newblock {\em Mem. Coll. Sci. Univ. Kyoto, Ser. A Math. 26\/} (1951),
  251--253.

\bibitem{Ste00}
{\sc Muller-Quade, J., and Steinwandt, R.}
\newblock Recognizing simple subextensions of pure transcendental field
  extensions.
\newblock {\em AAECC 11\/} (2000), 35--41.

\bibitem{Nag93}
{\sc Nagata, M.}
\newblock {\em Theory of commutative fields}.
\newblock Translations of Mathematical Monographs, vol 125. AMS, 1993.

\bibitem{Net1895}
{\sc Netto, E.}
\newblock {\"U}ber einen {L\"u}roth gordaschen staz.
\newblock {\em Math. Ann. 46\/} (1895), 310--318.

\bibitem{Noe15}
{\sc Noether, E.}
\newblock Korper und systeme rationaler funktionen.
\newblock {\em Math. Ann. 76\/} (1915), 161--196.

\bibitem{Sarito00}
{\sc Rubio, R.}
\newblock {\em Unirational fields. Theorems, Algorihms and Applications}.
\newblock Thesis Dissertation, University of Cantabria, Spain, 2000.

\bibitem{Schi95}
{\sc Schicho, J.}
\newblock A note on a theorem of {F}ried and {MacR}ae.
\newblock {\em Arch. Math. 65\/} (1995), 239--243.

\bibitem{Sch82}
{\sc Schinzel, A.}
\newblock {\em Selected topics on polynomials}.
\newblock University of Michigan Press, 1982.

\bibitem{sed86}
{\sc Sedeberg, T.}
\newblock Improperly parametrized rational curves.
\newblock {\em Computer Aided Geometric Design 3\/} (1986), 67--75.

\bibitem{Swe93}
{\sc Sweedler, M.}
\newblock Using {G}roebner bases to determine the algebraic and transcendental
  nature of field extensions: return of the killer tag variables.
\newblock In {\em AAECC-10\/} (Berlin; Heidelberg, 1993), G.~Cohen, T.~Mora,
  and O.~Moreno, Eds., vol.~673 of {\em LNCS}, Springer, pp.~66--75.

\bibitem{vdW66}
{\sc van~der Waerden, B.~L.}
\newblock {\em Modern Algebra, volumes 1 and 2}.
\newblock Frederic Unger Publishing Co., 1966.

\bibitem{Gat90}
{\sc von~zur Gathen, J.}
\newblock Functional decomposition of polynomials: the tame case.
\newblock {\em J. Symb. Comput. 9\/} (1990), 281--299.

\bibitem{GG99}
{\sc von~zur Gathen, J., and Gerhard, J.}
\newblock {\em Modern Computer Algebra}.
\newblock Cambridge University Press, 1999.

\bibitem{Zip91}
{\sc Zippel, R.}
\newblock {Rational Function Decomposition}.
\newblock In {\em {Proceedings of ISSAC'91}\/} (Baltimore, 1991), S.~M. Watt,
  Ed., ACM Press, pp.~1--6.

\end{thebibliography}
\end{document}